\documentclass[11pt,a4paper]{article}
\usepackage[hidelinks]{hyperref} 
\usepackage{amsthm}
\usepackage{thm-restate}
 \usepackage{fullpage}
\usepackage[utf8x]{inputenc}
\usepackage{todonotes,wrapfig,float,graphicx,amssymb,textcomp,array,amsmath}
\usepackage{enumerate,enumitem}
\usepackage{subcaption}
\usepackage{multirow}
\usepackage{lineno}
\usepackage{tabularx}
\usepackage{color,xcolor}

\usepackage{lineno}
\usepackage{cleveref}
\newcommand{\old}[1]{{}}
\newcommand{\later}[1]{{}}
\def\defn#1{\textit{\textbf{\boldmath #1}}}
\renewcommand{\emph}[1]{\defn{#1}} 

\newtheorem{theorem}{Theorem}
\newtheorem{lemma}[theorem]{Lemma}

\newtheorem{clm}[theorem]{Claim}
\newtheorem{corollary}[theorem]{Corollary}

\newtheorem{observation}[theorem]{Observation}
\newtheorem{rmk}[theorem]{Remark}

\DeclareMathOperator{\dist}{dist}

\DeclareMathOperator{\Vol}{Vol}

\newcommand{\ACY}{\textsf{\sc Classify}}

\newcommand{\RCY}{\textsf{\sc HR-Classify}}

\newcommand{\AF}{\textsf{\sc Filter}}

\newcommand{\AGDS}{\textsf{\sc FirstFit}}

\newcommand{\IR}{\mathbb{R}}

\def\OO{{\cal O}} 

\def\A{{\cal A}}

\def\I{{\mathcal I}}

\def\P1{{\mathcal P_1}}
\def\P2{{\mathcal P_2}}
\def\P{{\mathcal P}}

\def\C{{\mathcal C}}
\usepackage{lineno}

\definecolor{Darkblue}{rgb}{0,0,.8}
\definecolor{Brown}{cmyk}{0,0.61,1.,0.60}
\definecolor{Purple}{cmyk}{0.45,0.86,0,0}
\definecolor{Darkgreen}{rgb}{0.133,0.700,0.133}

\definecolor{MyGreen}{rgb}{0.30,0.400,0.30}
\renewcommand{\emph}[1]{{\color{Brown!85}{\em #1}}}

\usepackage{color}

\newcommand{\myqed}{\hfill $\ $}

\begin{document}
\title{Online Algorithms for Geometric Independent Set}
\author{Minati De\thanks{Dept. of Mathematics, Indian Institute of Technology Delhi, New Delhi, India. Email: \texttt{Minati.De@maths.iitd.ac.in}.}
\and
Satyam Singh\thanks{Department of Computer Science, Aalto University, Espoo, Finland. Email: \texttt{satyam.singh@aalto.fi}. Research on this paper was supported by the Research Council of Finland, Grant 363444. }}
\date{}

\maketitle             
\thispagestyle{empty}

\begin{abstract} 
In the classical online model, the maximum independent set problem admits an $\Omega(n)$ lower bound on the competitive ratio even for interval graphs, motivating the study of the problem under additional assumptions.
We first study the problem on graphs with a bounded independent kissing number $\zeta$, defined as the size of the largest induced star in the graph minus one.  We show that a simple greedy algorithm, requiring no geometric representation, achieves a competitive ratio of $\zeta$. Moreover, this bound is optimal for deterministic online algorithms and asymptotically optimal for randomized ones. This extends previous results from specific geometric graph families to more general graph classes.

Since this bound rules out further improvements through randomization alone, we investigate the power of randomization with access to geometric representation.
When the geometric representation of the objects is known, we present randomized online algorithms with improved guarantees. For unit ball graphs in $\IR^3$, we present an algorithm whose expected competitive ratio is strictly smaller than the deterministic lower bound implied by the independent kissing number.
For $\alpha$-fat objects and for axis-aligned hyper-rectangles in $\mathbb{R}^d$ with bounded diameters, we obtain algorithms with expected competitive ratios that depend polylogarithmically on the ratio between the maximum and minimum object diameters. In both cases, the randomized lower bound implied by the independent kissing number grows polynomially with the ratio between the maximum and minimum object diameters, implying substantial performance guarantees for our algorithms.
\end{abstract}

\newpage
\section{Introduction}
\pagenumbering{arabic}
The maximum independent set problem is a fundamental problem in combinatorial optimization~\cite{AdamaszekHW19,ChalermsookC09,ChanH12,ChristodoulouZ05,CormodeDK19,ErlebachJS05,GareyJ79,Karp,Mitchell21}.
Given a graph $G=(V,E)$, an \emph{independent set} (IS) is a subset $I\subseteq V$
such that no two vertices in $I$ are adjacent.
The \emph{maximum independent set} (MIS) problem seeks an independent set of maximum cardinality.

The MIS problem has been extensively studied across a wide range of computational models,
reflecting its importance in both theory and applications~\cite{GareyJ79,OliverHKSV14,HalldorssonIMT02,Karp,MaratheBHRR95,Vazirani01}.
One particularly challenging setting is the \emph{classical online model of computation}, where the vertices of a graph arrive one by one, and an algorithm must maintain a feasible independent set by irrevocably deciding, upon arrival, whether to accept or reject each vertex. The \emph{competitive ratio} (see~\Cref{Sect_Notation} for a formal definition) is the \emph{primary} performance measure of online algorithms.

In the classical online model, the online MIS problem is notoriously difficult~\cite{CaragiannisFKP07,ErlebachF06,OliverHKSV14,LiptonT94,MaratheBHRR95}.
It is well known that no online algorithm can achieve a competitive ratio better than $\Omega(n)$ even for interval graphs, where $n$ is the number of intervals~\cite{LiptonT94}. More generally, for arbitrary graphs in the classical online setting, such a pessimistic lower bound rules out the possibility of any sublinear competitive ratio.
As a consequence, a meaningful guarantee cannot be obtained without imposing an additional structure on the input.

This has led to two main lines of research.
One direction, which we adopt in this paper, is to focus on restricted subclasses of geometric intersection graphs\footnote{A \emph{geometric intersection graph} is a graph where vertices correspond to geometric objects and two vertices are adjacent if and only if the corresponding objects intersect.}, where geometric and combinatorial structure can be exploited by online algorithms~\cite{CaragiannisFKP07,ErlebachF06,MaratheBHRR95}.
Another direction addresses this difficulty by relaxing the classical online model, for example, by considering models that relax the irrevocability requirement~\cite{BoyarFKL22}, stochastic model~\cite{OliverHKSV14}, random-order arrival model~\cite{GargKK24}, or models permitting recourse~\cite{BergSS25}.
These models offer stronger performance guarantees but are not directly comparable to results from the classical online setting. We now briefly review the relevant prior work on the online MIS problem in the classical online model.

\subsection{Related Work}
Due to the pessimistic lower bound of $\Omega(n)$ known for interval graphs, there is a scarcity of literature on the online MIS problem in the classical online model.
For unit disk graphs, Marathe et al.~\cite{MaratheBHRR95} showed that the greedy algorithm $\AGDS$, which does not require disk representation, achieves a competitive ratio of at most~5.
 Later, Erlebach and Fiala~\cite{ErlebachF06} showed that this bound is tight, which proved that no deterministic online algorithm can achieve a better competitive ratio even when the geometric representation is available.
They also studied disk graphs having radii in the range $[1,M]$ and showed that $\AGDS$ achieves a competitive ratio of~$O(M^2)$.
Caragiannis et al.~\cite{CaragiannisFKP07} complemented this result by showing that, when the disk representation is not available, any deterministic or randomized online algorithm on disk graphs with radii in $[1,M]$ has a competitive ratio ~{$\Omega(M^2)$} under oblivious adversaries\footnote{In \emph{oblivious adversary}, the adversary fixes the entire input sequence in advance, knowing the algorithm but not the outcomes of its random choices. 
The performance of a randomized algorithm is evaluated by averaging over its random choices on the fixed input sequence.}.
These results provide a fairly complete picture of disk graphs with bounded radii. Beyond these restricted subclasses, the performance of the greedy algorithm $\AGDS$ remains largely unexplored, which naturally leads to the following question.

\begin{itemize}
\item[Q1.]  Can the performance guarantee of the greedy algorithm $\AGDS$ be extended to a broader class of graphs, and is this guarantee optimal even for any randomized algorithms?
\end{itemize}

Caragiannis et al.~\cite{CaragiannisFKP07} demonstrated that randomization together with the geometric representation can surpass the deterministic lower bound of $5$ for unit disk graphs. In particular, they presented a randomized algorithm with an expected competitive ratio of~4.41, against oblivious adversaries. They also studied disk graphs with radii in $[1,M]$ and presented a randomized algorithm achieving an asymptotically optimal expected competitive ratio of $\Theta(\log M)$, thereby surpassing the corresponding deterministic lower bound of $\Omega(M^2)$.
To the best of our knowledge, in the classical online model, no positive results are known apart from disk graphs with bounded radii. This raises a second, broader question.

\begin{itemize} 
\item[Q2.] Can randomization and geometric structure be systematically exploited to obtain positive results for more general families of geometric objects, such as similarly sized fat objects or axis-aligned hyper-rectangles in $\IR^d$? 
\end{itemize}

\subsection{Our Contributions}\label{sec_contri}
In this paper, we provide positive answers to both of the above-mentioned questions, as summarized below.
\paragraph*{Analysis of the greedy algorithm beyond disk graphs.}
We analyze the natural greedy algorithm $\AGDS$ for the online MIS problem on graphs with bounded independent kissing number (see~\Cref{Sect_Notation} for the definition), and show that this parameter exactly characterizes the optimal performance up to a constant factor achievable by any online algorithm.
\begin{restatable}{theorem}{thmGreedy}\label{thm:ind}
The algorithm $\AGDS$ achieves a competitive ratio of $\zeta$ for the online MIS problem on any graph whose independent kissing number is at most $\zeta$.
Moreover, this bound is optimal for deterministic algorithms: no deterministic online algorithm can achieve a competitive ratio smaller than $\zeta$ against adaptive adversaries on this class of graphs. 
For randomized algorithms, the bound is asymptotically optimal: no randomized online algorithm can achieve a competitive ratio better than $(\zeta+1)/2$
against oblivious adversaries on this class of graphs.
\end{restatable}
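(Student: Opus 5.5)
The argument has three parts: the upper bound for $\AGDS$, the deterministic lower bound against adaptive adversaries, and the randomized lower bound via Yao's principle.

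\textbf{Upper bound.} Fix the input order, and let $A$ be the set accepted by $\AGDS$ and $O$ a maximum independent set. By construction, every vertex $v\in O\setminus A$ was rejected because some neighbour of $v$ arriving earlier already lay in $A$.

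Charge each $v\in O$ to one vertex of $A$. A vertex in $O\cap A$ is charged to itself. A vertex in $O\setminus A$ is charged to some neighbour in $A$.

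Fix $a\in A$. If $a\in O$, then no other vertex of $O$ is adjacent to $a$, because $O$ is independent. So $a$ receives exactly one charge.

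If $a\notin O$, the vertices charged to $a$ are pairwise non-adjacent neighbours of $a$. Together with $a$ they induce a star, so there are at most $\zeta$ of them.

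Hence $|O|\le \zeta|A|$, i.e.\ the competitive ratio is $\zeta$. No additive constant is needed, and no geometry is used.

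\textbf{Deterministic lower bound.} The adaptive adversary plays inside the star $K_{1,\zeta}$, whose independent kissing number is $\zeta$; every induced subgraph of it also has independent kissing number at most $\zeta$. The adversary first presents the centre $c$.

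If the algorithm accepts $c$, the adversary presents the $\zeta$ leaves. Each leaf is adjacent to $c$ and must be rejected, so the algorithm ends with $1$ while the optimum is $\zeta$.

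If the algorithm rejects $c$, the adversary stops. The optimum is $1$ and the algorithm has $0$, which gives an unbounded ratio.

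To make the bound robust against additive constants, I would take many disjoint copies of this gadget, presented one after another. This gives a ratio of at least $\zeta$ asymptotically.

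\textbf{Randomized lower bound.} I would apply Yao's principle to the same star. With probability $1/2$ the input is the centre alone; with probability $1/2$ it is the centre followed by the $\zeta$ leaves. The algorithm must decide on $c$ before it sees which case occurs, so say it accepts $c$ with probability $p$.

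On the first input its expected value is $p$ against an optimum of $1$. On the second input its expected value is at most $p+(1-p)\zeta$ against an optimum of $\zeta$.

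The main obstacle is choosing the right measure of performance. One option is the ratio of expectations,
\[
\frac{\mathbb E[\mathrm{OPT}]}{\mathbb E[\mathrm{ALG}]}=\frac{(1+\zeta)/2}{\bigl(p+p+(1-p)\zeta\bigr)/2}=\frac{1+\zeta}{2p+(1-p)\zeta}.
\]
Its value at $p=1$ is $(1+\zeta)/2$. For $\zeta\ge2$ the denominator is at most $\zeta$, so this distribution only certifies a ratio of $(1+\zeta)/\zeta$, which is too weak.

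I would therefore argue directly against oblivious adversaries. Let $r$ be the claimed ratio. On the first input the algorithm needs $p\ge 1/r$. On the second it needs $p+(1-p)\zeta\ge \zeta/r$. These two constraints are compatible whenever $r\ge(\zeta+1)/2$, which is only consistent with the claimed bound if the losses are summed over repeated gadgets.

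To get a contradiction I would present many disjoint gadgets and, in each one, choose the centre-only or full-star version independently. The aim is to obtain the bound $(\zeta+1)/2$ by balancing the two cases, possibly with nested stars whose leaves themselves become centres. Making this balancing argument rigorous is the step I expect to need the most care.
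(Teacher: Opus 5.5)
Your upper bound and your star adversary for the deterministic bound are exactly the paper's arguments; in the upper bound you charge each vertex of $O\setminus A$ to an accepted neighbour, which necessarily lies in $A\setminus O$. The randomized bound, however, has a genuine gap, and the route you propose cannot close it. Your compatibility computation is off. The constraints $p\ge 1/r$ and $p+(1-p)\zeta\ge\zeta/r$ hold simultaneously as soon as $r\ge 2-1/\zeta$ (take $p=\zeta/(2\zeta-1)$). So the algorithm that accepts the centre with that probability is $(2-1/\zeta)$-competitive on both inputs of your distribution. Repeating the gadget does not help. On a disjoint union the algorithm can treat each copy independently, and linearity gives $\mathbb{E}[\mathrm{ALG}]=\sum_k\mathbb{E}[\mathrm{ALG}_k]\ge\sum_k\mathrm{OPT}_k/r$. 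Hence the ratio on the union never exceeds the worst single-copy ratio, and any scheme built from disjoint stars certifies less than $2$, far from $(\zeta+1)/2$. The underlying problem is that in a star the algorithm knows which vertex is the centre before it commits.

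The missing idea is to hide which of two indistinguishable vertices will become a hub. The paper (following Caragiannis et al.) builds $G_\zeta$ with $\zeta$ levels of two non-adjacent vertices each. For $i\ge 2$, a fair coin selects one vertex $c_{i-1}$ of level $i-1$, and both vertices of level $i$ are joined to $c_{i-1}$ and to all earlier vertices adjacent to it, i.e.\ to $c_1,\dots,c_{i-1}$. The unselected vertices of levels $1,\dots,\zeta-1$ plus both vertices of level $\zeta$ form an independent set of size $\zeta+1$ on every input. Because OPT is the same on the whole support, the performance-measure issue you worry about disappears.

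A deterministic algorithm that has accepted some $c_j$ can accept nothing later. At level $i<\zeta$, each vertex it accepts turns out to be $c_i$ with probability $1/2$, since the two vertices look identical. Write $V_i$ for the best expected gain from level $i$ on while unblocked. Then $V_\zeta=2$ and $V_i=\max\{V_{i+1},\,1+V_{i+1}/2,\,2\}=2$, so by Yao's principle every randomized algorithm has expected value at most $2$ on some input.

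It remains to check that the independent kissing number is at most $\zeta$. The selected vertices form a clique adjacent to everything at later levels. The neighbourhood of every unselected or last-level vertex is a clique. An independent set in $N(c_i)$ is either a single earlier $c_j$, or it lies in levels $i+1,\dots,\zeta$, where it has at most $\zeta-i+1\le\zeta$ vertices. Your remark about nested stars gestures in this direction, but without this construction and its analysis the third claim of the theorem remains unproved.
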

This broadens earlier analyses~\cite{CaragiannisFKP07,ErlebachF06,MaratheBHRR95} for specific geometric graph families, such as unit disk graphs and disk graphs with radii in $[1,M]$, to more general graph classes.

\paragraph*{Use of randomization along with geometric representation.}
The lower bound by~\Cref{thm:ind} ensures that randomization alone would not be much help. Therefore, we investigate whether randomization, together with access to geometric representations, can improve competitive guarantees for the online MIS problem, surpassing the lower bound obtained in~\Cref{thm:ind}.

\smallskip\noindent
\textit{\underline{Unit balls in $\IR^3$.}}
We extend the two-dimensional randomized algorithm for unit disks  of Caragiannis et al.~\cite{CaragiannisFKP07} to unit balls in $\IR^3$. Our approach is based on a new lattice (extendable to any higher dimension) with non-trivial key structural properties and yields the following provable guarantees. 
\begin{restatable}{theorem}{thmUnitDisk}\label{thm:rand-alpha}
There exists a randomized online algorithm for the online MIS problem on the geometric intersection graph of unit balls in $\IR^3$ that achieves an expected competitive ratio of at most $\frac{1}{\pi}(36+\varepsilon)\equiv 11.46$ against oblivious adversaries, where $\varepsilon>0$ is an arbitrarily small constant.
\end{restatable}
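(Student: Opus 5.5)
We follow the scheme of Caragiannis et al.\ for unit disks: partition $\IR^3$ into congruent cells via a lattice, choose one class of cells at random, and run a greedy rule inside the chosen cells. Fix a lattice $\Lambda$ in $\IR^3$ and a fundamental cell $C$. The cells are the translates $C+\lambda$ for $\lambda\in\Lambda$. Inside each cell we keep a slightly shrunk core region $C'\subset C$, chosen so that any two unit balls (radius $1$) whose centers lie in the cores of distinct cells of the same colour class never intersect. We colour the lattice with $k$ colours using a sublattice $\Lambda'$ of index $k$. We also apply a uniformly random translation of the whole tiling. This random shift plays the role of the $\varepsilon$ and of the volume-ratio argument.

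The algorithm works as follows. It picks a uniformly random shift vector and a uniformly random colour class. When a ball arrives, it accepts the ball iff three conditions hold: its center lies in the core of a cell of the chosen class, that cell currently holds no accepted ball, and the ball does not intersect any previously accepted ball. The third condition is automatic by the separation property of the cores. So the output contains exactly one ball per occupied active cell, and it is always feasible.

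For the analysis, let $\OO$ be an optimal independent set, whose centers are pairwise at distance $\ge 2$. Fix a single active cell. The algorithm gains $1$ from that cell whenever some center of the input lies in its core. The optimum gains at most the number of centers of $\OO$ in the whole cell $C$. We bound this packing number: a cell of diameter $D$ can contain at most a bounded number $m$ of points with pairwise distance $\ge 2$. We choose the lattice and cell so that $m$ is $1$, or so that the ratio of the optimum's share to the algorithm's gain is controlled.

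The cleaner route is a probability argument. Each point of $\OO$ lands in an active core with probability $p=\Vol(C')/(k\Vol(C))$ over the random shift and colour. If the cells are small enough that each contains at most one center of $\OO$, which holds when the cell diameter is below $2$, then the expected size of the algorithm's output is at least the expected number of active cores hit by $\OO$, which equals $p|\OO|$. The competitive ratio is therefore $1/p=k\Vol(C)/\Vol(C')$.

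The main obstacle is the geometric design. We need a lattice, a cell of diameter $<2$, a core, and a colouring with index $k$ such that same-coloured cores are at distance $\ge 2$ apart. At the same time the ratio $k\Vol(C)/\Vol(C')$ must approach $36/\pi$. In the disk case this is the construction of Caragiannis et al. For $\IR^3$, I would first try taking each core to be a ball of radius $r$ close to $1$ inside a cell. The colour classes are then chosen so that same-colour cores are separated by $2$, which means core centers at distance $2+2r\approx4$. The density of active core volume is then $\frac{4}{3}\pi r^3$ per fundamental region of the sublattice $\Lambda'$, which has volume $\Vol(\IR^3/\Lambda')$.

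Matching the target $\pi/36$ requires $\Vol(\IR^3/\Lambda')\approx 48$ with $r\to1$; the slack $\varepsilon$ comes from taking $r$ slightly smaller. A direct check of the cell-diameter condition is needed here, and this is where the nontrivial structural properties of the paper's new lattice must come in. I would try a cubic-type sublattice with cores of radius $1-\delta$ and verify numerically that the index and separation constraints yield exactly $36/\pi+\varepsilon$.
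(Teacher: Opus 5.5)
Your framework matches the paper's: shift a lattice uniformly at random, keep only balls whose shifted centre lies within distance $r\le 1$ of a lattice point, note that kept balls at the same lattice point form a clique while kept balls at different lattice points are disjoint, and conclude that the ratio is $\det(\Lambda')/\Vol(\text{core})$. The extra layer of cells plus an index-$k$ colouring is redundant, since a uniform shift modulo $\Lambda$ followed by a uniformly random coset of $\Lambda'$ is just a uniform shift modulo $\Lambda'$.

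\textbf{The gap.} You never exhibit the object that carries the whole constant $36/\pi$: a lattice in $\IR^3$ with minimum distance greater than $4$ (or greater than $2+2r$) and covolume $48+O(\delta)$. You correctly derive that this is what is needed, but your suggested candidate fails. A cubic lattice with minimum distance $4$ has covolume $64$, so the ratio is $64/(\tfrac43\pi)=48/\pi\approx 15.3$, which does not even beat the deterministic bound $12$.

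The paper supplies the lattice generated by $(4+\delta)\mathbf e_1$, $-(2+\delta/2)\mathbf e_1+2\sqrt3\,\mathbf e_2$ and $-(2+\delta/2)\mathbf e_1+2\sqrt3\,\mathbf e_3$. A difference vector has the form $\bigl((2+\delta/2)(2a_1-a_2-a_3),\,2\sqrt3 a_2,\,2\sqrt3 a_3\bigr)$. A short case analysis on $(a_2,a_3)$ gives minimum distance $>4$: if exactly one of them is $\pm1$, the first coordinate is an odd multiple of $2+\delta/2$, and $(2+\delta/2)^2+12>16$. The covolume is $(4+\delta)(2\sqrt3)^2=12(4+\delta)$, realised by the box $R$ in the paper's volume-preserving theorem. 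So a shifted centre is accepted with probability $\frac{(4/3)\pi}{48+12\delta}$, and the ratio is $(36+9\delta)/\pi$. The slack comes from $\delta$ in the lattice, not from shrinking the cores. Any lattice with these two properties would do; a slightly dilated face-centred cubic lattice, with covolume about $32\sqrt2\approx 45.3$, would even give roughly $10.8$.

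\textbf{A secondary problem.} Your requirement that cells have diameter $<2$ is unnecessary, and with cores that are balls of radius $r\to1$ it cannot be met. A lattice tile of diameter $D<2$ containing $B(c,r)$ lies in $B(c,D-r)$. The translates of $B(c,r)$ then form a lattice packing of density at least $(r/(2-r))^3\to 1$, contradicting Gauss's bound $\pi/\sqrt{18}$.

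The analysis only needs each core to have diameter at most $2$, which is automatic for a closed ball of radius $r\le 1$. Then all balls centred in one core pairwise intersect. Each core contains at most one centre of $\OO$, because those centres are pairwise at distance strictly greater than $2$. Closed unit balls at distance exactly $2$ touch, so your ``$\ge 2$'' separation conditions should be strict. This yields $\mathbb E[|\A|]\ge p|\OO|$ exactly as in the paper.
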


Since the independent kissing number is exactly~12 for unit ball graphs in $\IR^3$~\cite[Thm.~7 (a)]{DeKS26}, the competitive ratio obtained by~\Cref{thm:rand-alpha} surpasses the deterministic lower bound of~12 obtained in~\Cref{thm:ind}.
Our algorithm and its analysis extend to higher dimensions; however, the resulting competitive ratios do not surpass the deterministic lower bounds for dimensions $d \ge 4$; see~\Cref{sect_extension} for details.


\smallskip\noindent
\textit{\underline{Similarly Sized Fat Objects in $\IR^d$.}}  
    For $\alpha$-fat objects in $\IR^d$ whose widths\footnote{Informally, the \emph{width} of an object is defined as the radius of the largest ball enclosed in the object.} lie in the range $[1,M]$, with $M>2$, we present a randomized online algorithm with provable competitive ratio guarantees. 
\begin{restatable}{theorem}{thmFat}\label{thm:ACY}
   Let $M>2$, and let $\zeta'$ be the independent kissing number of $\alpha$-fat objects with widths in $[1,2]$. There exists a randomized online algorithm for the online MIS problem on the geometric intersection graph of $\alpha$-fat objects in $\IR^d$ with widths in the range $[1,M]$ that achieves an expected competitive ratio of at most~{${\zeta'}({\lfloor\log M\rfloor}+1)$} against oblivious adversaries.
\end{restatable}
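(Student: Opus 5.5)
The plan is the standard classify-and-randomly-select approach, with \Cref{thm:ind} used inside each class. Partition the width range $[1,M]$ into $k=\lfloor\log M\rfloor+1$ classes. For $i=0,1,\dots,k-1$, class $\C_i$ consists of the objects whose width lies in $[2^i,2^{i+1})$; the last class is allowed to include width exactly $M$, and it may be a shorter range if $M$ is not a power of $2$. These classes cover $[1,M]$ because $2^{k}=2^{\lfloor\log M\rfloor+1}>M$. Since the geometric representation is available, the algorithm can compute the class of each object on arrival.

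The algorithm (call it \ACY) works as follows. Before any input arrives, it chooses an index $j\in\{0,\dots,k-1\}$ uniformly at random. It then runs \AGDS{} only on the subsequence of objects in $\C_j$ and rejects every object from any other class. The output is an independent set, since \AGDS{} keeps its accepted set independent. The random choice is made independently of the input, so the analysis is valid against an oblivious adversary.

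Next, I bound the independent kissing number of each class by $\zeta'$. Take any class $\C_i$ and scale all of its objects by the factor $2^{-i}$. Scaling preserves $\alpha$-fatness and intersection patterns, and it maps widths in $[2^i,2^{i+1})$ into $[1,2)\subseteq[1,2]$. The intersection graph of $\C_i$ is therefore isomorphic to an intersection graph of $\alpha$-fat objects with widths in $[1,2]$, so its independent kissing number is at most $\zeta'$. The point needing care is that the definitions of fatness and width from \Cref{Sect_Notation} are scale invariant, meaning width scales linearly and the fatness ratio is unchanged; I expect this to hold by definition. Given this, \Cref{thm:ind} says that \AGDS{} run on $\C_j$ returns at least $|OPT_j|/\zeta'$ objects, where $OPT_j$ is a maximum independent set of the objects in $\C_j$.

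Finally, I compare with the global optimum. Let $OPT$ be an optimal independent set of the whole input, and let $OPT\cap\C_i$ be its restriction to class $i$. This restriction is independent within $\C_i$, so $|OPT_i|\ge|OPT\cap \C_i|$. Summing over the classes, the expected output size is at least
\[
\frac{1}{k}\sum_{i}\frac{|OPT_i|}{\zeta'}\;\ge\;\frac{1}{k\zeta'}\sum_i |OPT\cap\C_i|\;=\;\frac{|OPT|}{\zeta'(\lfloor\log M\rfloor+1)}.
\]
This gives the claimed expected competitive ratio. The main obstacle is the scaling step: one must check that the competitive-ratio definition of \Cref{thm:ind} yields a multiplicative bound with no additive constant, and that the boundary widths of each class are handled consistently.
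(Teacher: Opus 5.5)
Your proposal is correct and matches the paper's proof essentially step for step. It uses the same partition into $\lfloor\log M\rfloor+1$ width classes $[2^j,2^{j+1})$, runs \AGDS{} on a uniformly random class, uses scaling to bound each class's independent kissing number by $\zeta'$, and averages $|\OO_j|\ge|\OO\cap S_j|$ over the classes. The multiplicative form you were concerned about does hold, since \Cref{lem_DUB} gives $|\OO|\le\zeta|\A|$ on every input with no additive constant.
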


To obtain this, we use a standard decomposition by partitioning the objects into $\lfloor \log M \rfloor + 1$ classes according to their widths. Selecting one class uniformly at random and running the greedy algorithm $\AGDS$ on that class yields an expected competitive ratio of $O(\zeta' \log M)$, where $\zeta'$ denotes the independent kissing number for objects with widths in $[1,2]$. 
By~\cite[Thm.~7 (f)]{DeKS26}, we have $\zeta' \le \left(\frac{2}{\alpha}+2\right)^d$, while the independent kissing number $\zeta$ for objects with widths in $[1,M]$ satisfies $\zeta \ge \left(\frac{\alpha}{2}\frac{M+2}{1+\varepsilon}\right)^d$ for any constant $\varepsilon>0$. Hence, $\zeta = \Omega((\alpha M)^d)\zeta'$. Therefore, the competitive ratio achieved by our randomized algorithm is significantly smaller than the deterministic lower bound~$\zeta$ (by \Cref{thm:ind}).

\smallskip\noindent
    \textit{\underline{Similarly Sized Hyper-rectangles in $\IR^d$.}}  
Axis-aligned hyper-rectangles in $\IR^d$ with side lengths in $[1,M]$ form a special subclass of $\alpha$-fat objects with $\alpha=O(1/(M\sqrt{d}))$.  Now, using the known bound $\zeta'=O((2/\alpha)^d)$ due to~\cite[Thm.~7 (f)]{DeKS26}, the aforementioned~\Cref{thm:ACY} implies an expected competitive ratio of $O((M\sqrt{d})^d\log M)$ for this class.
    By exploiting the additional structure of hyperrectangles, we further obtain significantly improved performance.
\begin{restatable}{theorem}{thmRect}\label{thm:RCY}
    For any $M>2$, there exists a randomized online algorithm for the online MIS problem on the geometric intersection graph of hyper-rectangles in $\IR^d$ having side lengths in $[1,M]$ that achieves an expected competitive ratio of at most~$(4\left({\lfloor\log M\rfloor}+1\right))^d$ against oblivious adversaries.
\end{restatable}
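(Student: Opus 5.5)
Our plan is a classify-and-select scheme, applied independently in each of the $d$ coordinates. For $j\in\{0,1,\dots,\lfloor\log M\rfloor\}$, say a hyper-rectangle $R=\prod_{i=1}^d[a_i,a_i+\ell_i]$ has \emph{type} $(j_1,\dots,j_d)$ if $\ell_i\in[2^{j_i},2^{j_i+1})$ for each $i$ (side length exactly $M$ is placed in the top class). There are $k^d$ types, where $k=\lfloor\log M\rfloor+1$.

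Before any input arrives, the algorithm draws a type $(j_1,\dots,j_d)$ uniformly at random. It discards every rectangle of any other type. On the rectangles of the chosen type it runs $\AGDS$: accept an arriving rectangle iff it meets no previously accepted one.

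Fix an optimal solution $\OO$. By linearity of expectation, the algorithm sees $\OO_{j}$, the optimal rectangles of the chosen type, and $\mathbb{E}[|\OO_j|]=|\OO|/k^d$. It therefore suffices to prove that $\AGDS$ restricted to a single type is $2^d$-competitive. Together with the $k^d$ loss this gives $(2k)^d\le(4k)^d$. There is slack of $2^d$, which can absorb a cruder kissing argument.

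\textbf{Kissing-number bound for one type.} We bound, for a single type, the maximum number of pairwise disjoint rectangles of that type that can all intersect one fixed rectangle $R$ of the same type. Each such rectangle $S$ intersects $R$, and in coordinate $i$ its side is at most $2^{j_i+1}$ while $R$'s side is at least $2^{j_i}$. Hence in coordinate $i$, $S$ must contain one of the at most $\lceil \ell_i(R)/2^{j_i}\rceil+1\le 3$ points of a grid of spacing $2^{j_i}$ lying in the $2^{j_i}$-neighbourhood of $R$'s projection, since $S$'s projection has length at least $2^{j_i}$ and meets $R$'s projection.

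The bound we want should be per coordinate at most $4$, taking into account that the disjoint rectangles must lie around $R$. The crude count goes as follows. Each $S$ contains a grid point of the product grid restricted to the neighbourhood box of $R$, with at most $4$ grid values per axis. Pairwise disjoint closed boxes cannot share a grid point, so an independent set adjacent to $R$ has size at most $4^d$. Thus $\zeta_{\text{type}}\le 4^d$, and by \Cref{thm:ind} $\AGDS$ is $4^d$-competitive on each type.

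\textbf{Combining.} Let $A_j$ be $\AGDS$'s output on type $j$. Then
\[
\mathbb{E}[|A|]=\frac1{k^d}\sum_j |A_j|\ \ge\ \frac1{k^d}\sum_j \frac{|\OO_j|}{4^d}=\frac{|\OO|}{(4k)^d},
\]
which gives the theorem.

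\textbf{Main obstacle.} The delicate step is the grid-point count. We must choose grid offsets so that every rectangle of length in $[2^{j},2^{j+1})$ meeting $R$'s projection, which has length in $[2^{j},2^{j+1})$, is guaranteed to contain one of at most $4$ points. Place points at spacing $2^j$ starting from $a_i-2^{j}$ up to $a_i+\ell_i+2^{j}$. A segment of length $\ge 2^j$ meeting $[a_i,a_i+\ell_i]$ lies within that range and contains some point. The count is at most $\lfloor(\ell_i+2^{j+1})/2^j\rfloor+1$, which could be $5$ when $\ell_i$ is near $2^{j+1}$.

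To get $4$, we would instead use the fact that each $S$ contains a full grid interval, and we charge $S$ to the first grid point it contains. Alternatively, we use spacing $2^j$ but note that the extreme points can be shared. If this fails, we refine by halving the classes to ratio-$2$ widths, both open. Failing that, we accept $5$ per axis while still using the $2^d$ slack: more precisely, we directly compare with $\OO_j$ by charging each optimal rectangle to an accepted rectangle it meets, and bound the charge by the number of disjoint same-type boxes meeting a box, which is at most $4^d$ by this interval argument after a careful endpoint convention (half-open grid cells).
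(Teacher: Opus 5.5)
Your outer scheme is the same as the paper's: draw one of the $k^d$ side-length types uniformly at random, run \textsc{FirstFit} on that type, bound each type's independent kissing number by $4^d$, and combine by linearity. Your proof of the key lemma is genuinely different. The paper uses packing and volume: neighbours, shrunk to $2^{i_1}\times\cdots\times 2^{i_d}$ boxes, are non-overlapping inside the $2^{i_j}$-expansion $L'$ of the centre, and $L'$ has volume less than $4^d$ times that of a shrunk box. You use piercing instead: a fixed set of $4^d$ grid points such that every neighbour contains one of them, so pairwise disjoint neighbours number at most $4^d$. The piercing argument is correct. It handles arbitrary neighbours and centres of the class directly, and it makes explicit the shrinking step that the paper leaves implicit in its reduction to ``smallest copies around a largest rectangle.'' The paper's volume count is shorter and in fact gives strictly fewer than $4^d$.

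Your write-up, however, talks itself out of the argument in three places.
\begin{enumerate}
\item The ``main obstacle'' is an arithmetic slip. The classes are half-open, so $\ell_i<2^{j+1}$ strictly, and hence $\lfloor(\ell_i+2^{j+1})/2^j\rfloor+1\le 3+1=4$. The grid points are exactly $a_i-2^j$, $a_i$, $a_i+2^j$, $a_i+2^{j+1}$. Your earlier ``$\lceil \ell_i/2^{j_i}\rceil+1\le 3$'' counts grid points of $R$'s projection, not of its neighbourhood.
\item A segment of length $\ge 2^j$ meeting $[a_i,a_i+\ell_i]$ need not lie in $[a_i-2^j,a_i+\ell_i+2^j]$, since a neighbour can stick out by almost $2^{j+1}$. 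The correct step goes through a sub-interval. Let $x$ be a common point of $[s_i,t_i]$ and $[a_i,a_i+\ell_i]$, and set $s_i'=\max(s_i,x-2^j)$. Then $[s_i',s_i'+2^j]$ lies in $[s_i,t_i]\cap[a_i-2^j,a_i+\ell_i+2^j]$. Being a closed interval of length $2^j$, it contains one of the four grid points. Doing this in every coordinate puts a point of the $4^d$-point product grid inside $S$, and pairwise disjoint closed boxes cannot share such a point.
\item There is no ``$2^d$ slack.'' The target $(4k)^d$ requires exactly $\zeta_{\text{type}}\le 4^d$. Your last fallback of $5$ points per axis would give only $(5k)^d$ and would not prove the theorem. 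The $2^d$ target is also false already for $d=1$: the intervals $[-1,0]$, $[0.05,1.05]$, $[1.1,2.1]$ are pairwise disjoint, have lengths in $[1,2)$, and all meet $[0,1.9]$.
\end{enumerate}
With the count fixed at $4$ per axis, none of your fallbacks is needed and the proof goes through.
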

The main challenge in proving \Cref{thm:RCY} is to control the independent kissing number after refining the input into $(\log M)^d$ subclasses based on side lengths. Although this refinement reduces the dependence on $M$, it is not immediate that the objects within a single subclass admit a constant independent kissing number. Establishing that this number is bounded solely as a function of the dimension $d$, and independent of $M$, is crucial for obtaining the stated competitive ratio.

\subsection{Organization.}
In~\Cref{Sect_Notation}, we introduce notation, definitions, and preliminaries.
In~\Cref{sec_deter}, we analyze the greedy algorithm and prove optimal deterministic and randomized lower bounds for the online MIS problem for graphs with bounded independent kissing number.
\Cref{sec_disk} presents a lattice-based randomized algorithm for unit ball graphs in $\mathbb{R}^3$ that improves upon the deterministic bound.
In~\Cref{sec_fat}, we develop a randomized approach for the online MIS problem for similarly sized $\alpha$-fat objects in $\mathbb{R}^d$.
In~\Cref{sec_rect}, for axis-aligned hyper-rectangles in $\mathbb{R}^d$,   exploiting their structure, we present a randomized algorithm obtaining stronger guarantees.
Finally,~\Cref{sect_conclusion} concludes with a discussion on open problems.

\section{Notation and Preliminaries}\label{Sect_Notation}
For a positive integer $n$, we use $[n]$ to represent the set $\{1,2,\ldots,n\}$. For a point $p\in\IR^d$ and an index $i \in [d]$, we use $p(x_i)$ to denote the $i$th coordinate of $p$. For any two points $a,b\in \IR^d$, we denote by $\dist(a,b)$ the Euclidean distance between $a$ and $b$. By an \emph{object}, we refer to a compact set in $\mathbb{R}^d$ with a nonempty interior. Two objects $u$ and $v$ are said to \emph{intersect} each other if $u\cap v\neq \emptyset$.
They are called \emph{non-overlapping} if their interiors are disjoint, and they are \emph{non-touching} if $u \cap v = \emptyset$. Note that two non-overlapping objects may still intersect at their boundaries. 
Given a family $\cal S$ of geometric objects in $\IR^d$, the \emph{geometric intersection graph} $G$ of $\cal S$ is an undirected graph $G(V,E)$, where $V=\cal S$, and $E=\{\{u,v\}| u,v\in {\cal S} \text{ and } u\cap v\neq \emptyset \}$.
 For an object $\sigma \subseteq \IR^3$, we denote its volume by $\Vol(\sigma)$. A \emph{unit ball} is a ball of radius~$1$.

\smallskip\noindent
\textbf{Fat Object.}
Let $\sigma$ be an object and let $x\in \sigma$ be any point. Let $\alpha(x)$ denote the ratio between the minimum and maximum Euclidean distance from $x$ to the boundary $\delta(\sigma)$ of $\sigma$. 
	The \emph{aspect ratio} of $\sigma$ is defined as  $\alpha=\max \{\alpha(x)  \mid x \in  \sigma\}$. We call $\sigma$ an \emph{$\alpha$-fat object} if its aspect ratio is at least $\alpha$.
A point $x \in \sigma$ at which the maximum is attained is called an
\emph{aspect point} of $\sigma$. Such a point need not be unique. The minimum and maximum Euclidean distances from an aspect point to the boundary $\delta(\sigma)$ are referred to as the \emph{width} and \emph{height} of $\sigma$, respectively. 
  For an $\alpha$-fat object, the value of $\alpha$ is invariant under translation, rotation, reflection, and scaling. 
 For further details, we refer to~\cite{DeJKS24}.

A family $\cal S$ of objects is called \emph{fat} if there exists a constant  $\alpha\in(0,1]$ such that every object in $\cal S$ is $\alpha$-fat. Note that objects in $\cal S$ need not be convex or connected.
We say that $\cal S$ consists of \emph{similarly sized fat objects} if the ratio of the largest width to the smallest width among the objects in $\cal S$ is bounded by a fixed constant.

\smallskip\noindent
\textbf{Independent Kissing Number.}
Let $G=(V, E)$ be a graph. For any subset $V'\subseteq V$, the \emph{induced subgraph} $G[V']$ is the graph whose vertex set is $V'$ and whose edge set consists of all of the edges in $E$ with both endpoints in  $V'$. We denote by $\varphi(G)$ the size of a maximum independent set (MIS) of $G$. For a vertex $v\in V$,  let $N(v)=\{u(\neq v) \in V | \{u,v\} \in E\}$ denote the neighborhood of the vertex $v$. The \emph{independent kissing number} of a graph  $G=(V,E)$  is defined as $\zeta=\max_{v \in V} \{\varphi(G[N(v)])\}$.
 Equivalently, a graph has independent kissing number $\zeta$ if and only if it is $K_{1,\zeta+1}$-free. 
Note that $\zeta$ can be much smaller than the number of vertices in the graph.
For  a detailed study on the independent kissing number, we refer to~\cite{DeKS26}.

\smallskip\noindent
\textbf{Competitive Ratio.} Let $\A$ be an online algorithm for a maximization problem.  We say that $\A$ is \emph{$c$-competitive} if $c=\sup_{\cal I}\frac{\OO_{\cal I}}{\A_{\cal I}}$,  where $\A_{\cal I}$ and $\OO_{\cal I}$ denote the values of the solutions produced by $\A$ and by an optimal offline algorithm, respectively, on an input sequence $\cal I$.  If $\A$ is a randomized algorithm, then $\A_{\cal I}$ is replaced by its expected value~\cite{BorodinE,BorodinP}.


\section{Optimal Competitive Deterministic Algorithm }\label{sec_deter}
In this section, we prove~\Cref{thm:ind}.
The proof of~\Cref{thm:ind} is based on three lemmas.
\Cref{lem_DUB} establishes an upper bound on the competitive ratio of the greedy algorithm  $\AGDS$ in terms of the independent kissing number. \Cref{lem_DLB} presents a matching lower bound for all deterministic online algorithms, while~\Cref{lem_RLB} presents an 
asymptotically matching lower bound for all randomized online algorithms, obtained by generalizing the construction of Caragiannis et al.~\cite[Thm.~7]{CaragiannisFKP07} with slight effort.
Together, these lemmas imply~\Cref{thm:ind}: $\AGDS$ is an optimal competitive deterministic algorithm for the problem.

The algorithm $\AGDS$ works as follows: It maintains an independent set $\A$. Initially, $\A=\emptyset$. 
Upon the arrival of a vertex $v$, the algorithm adds $v$ to $\A$ if and only if $v$ is not adjacent to any vertex already in $\A$.

\begin{restatable}{lemma}{lemDUB}\label{lem_DUB}
    The algorithm $\AGDS$ achieves a competitive ratio of $\zeta$ for the online MIS problem on any graph whose independent kissing number is at most~$\zeta$ under adaptive adversaries\footnote{In \emph{adaptive adversary}, the adversary generates the input sequence in online fashion, choosing each request as a function of the algorithm’s past actions.}. 
\end{restatable}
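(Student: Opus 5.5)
The plan is a charging argument. Let $\A$ be the final set produced by $\AGDS$ on an input sequence $\I$, and let $\OO$ be a maximum independent set of the graph $G$ induced by all vertices of $\I$. The adversary may be adaptive, but this does not matter. $\AGDS$ is deterministic, so once the sequence is complete, $G$ and $\A$ are fixed. It therefore suffices to show $|\OO| \le \zeta |\A|$ for every finite graph $G$ with independent kissing number at most $\zeta$ and every arrival order.

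First, I would observe that $\A$ is a maximal independent set of $G$. Suppose some vertex $v \notin \A$ had no neighbor in $\A$. When $v$ arrived, the set maintained by the algorithm was a subset of the final $\A$, so $v$ had no neighbor in it either, and the algorithm would have accepted $v$. Hence every $v \in \OO \setminus \A$ has at least one neighbor in $\A$. We charge each such $v$ to one neighbor $a(v) \in \A$, for instance the earliest-arriving one. We charge each $v \in \OO \cap \A$ to itself.

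Next, I would bound the charge received by each $a \in \A$. There are two cases.
\begin{itemize}
\item If $a \in \OO$, then no other vertex of $\OO$ is adjacent to $a$, since $\OO$ is independent. So $a$ receives exactly one charge, and $1 \le \zeta$ whenever $\zeta \ge 1$.
\item If $a \notin \OO$, then the vertices charged to $a$ are an independent subset of $N(a)$. By the definition of $\zeta$, there are at most $\varphi(G[N(a)]) \le \zeta$ of them.
\end{itemize}
Summing over $\A$ gives $|\OO| \le \zeta |\A|$, so the ratio is at most $\zeta$.

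There is no real obstacle. The only care needed is the degenerate case $\zeta = 0$: then $G$ has no edges, $\AGDS$ accepts every vertex, and the ratio is $1$. This should be handled with the convention $\zeta \ge 1$, or stated separately. I would also remark that tightness of the ratio $\zeta$ is shown later by \Cref{lem_DLB}, so the lemma only needs the upper bound.
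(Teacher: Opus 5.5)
Your proposal is correct and is essentially the paper's argument. Both proofs use that $\AGDS$ outputs a maximal (hence dominating) independent set. Both assign each vertex of $\OO\setminus\A$ to a neighbor in $\A\setminus\OO$, since independence of $\OO$ rules out neighbors in $\A\cap\OO$, and then bound the number of vertices assigned to each such neighbor by $\zeta$.

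Your version adds two small refinements. You explicitly justify why an adaptive adversary does not matter for the deterministic $\AGDS$. You also treat the degenerate case $\zeta=0$, which the paper implicitly excludes when it passes from $|\OO'|\le\zeta|\A'|$ to $|\OO|\le\zeta|\A|$.
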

\begin{proof}
     Let $G(V,E)$ be the graph revealed to the algorithm, and let $\zeta$ denote the independent kissing number of $G$. Let $\A$ and $\OO$  be the sets of vertices reported by our algorithm $\AGDS$ and by an optimal offline algorithm, respectively. 
     Since $\AGDS$ constructs an independent set, every vertex not selected by the algorithm must be adjacent to some selected vertex; hence $\A$ is a dominating set\footnote{For a graph $G=(V,E)$, a subset $D \subseteq V$ is called a \emph{dominating set} if for every vertex $v \in V$, either $v \in D$ or there exists an edge $\{u,v\} \in E$ with $u \in D$.} of $G$.
Let $\C=\A\cap \OO$, and let $\A'=\A \setminus \C$ and $\OO'=\OO\setminus \C$. Consider any vertex $a\in \A'$, and let $\OO_a\subseteq \OO'$ denote the set of vertices in $\OO'$ that are adjacent to $a$. Since $\OO_a$ is an independent set and $G$  has the independent kissing number $\zeta$, we have  $|\OO_a|\leq \zeta$.
Now consider any vertex $o\in \OO'$. Since $\A$ is a dominating set of $G$ and $o\notin \A$, the vertex $o$ must be adjacent to some vertex in $\A$. Moreover, since $\OO=\C\cup \OO'$ is an independent set, $o$ cannot be adjacent to any vertex in $\C$. Therefore, $o$ must be adjacent to some vertex $a\in \A'(=\A\setminus \C)$. Equivalently, $o\in \OO_a$ for some $a\in \A'$.
 It follows that, $\OO'=\cup_{a\in \A'}\OO_a$. Consequently, $|\OO'|\leq \zeta|\A'|$, resulting in $|\OO|\leq \zeta|\A|$.  Therefore, the algorithm $\AGDS$ achieves a competitive ratio of at most~$\zeta$.
\end{proof}

\begin{restatable}{lemma}{lemDLB}\label{lem_DLB}
No deterministic online algorithm can achieve a competitive ratio better than~$\zeta$ for the online MIS on graphs whose independent kissing number is at most~$\zeta$ {under adaptive adversaries}.
\end{restatable}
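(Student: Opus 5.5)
The plan is an adaptive adversary that builds the star $K_{1,\zeta}$, which has independent kissing number exactly $\zeta$. A star $K_{1,\zeta+1}$ cannot appear as an induced subgraph, since the graph we build has only $\zeta+1$ vertices. Fix any deterministic online algorithm $\A$. The adversary first presents a single vertex $c$, the future center, and then reacts to $\A$'s decision.

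\textbf{Case 1: $\A$ rejects $c$.} The adversary stops. The optimum is $1$ and $\A$ has $0$, so the ratio is unbounded. If one insists on inputs where $\A$ is nonzero, the adversary can instead continue presenting fresh isolated vertices until one is accepted; any algorithm that keeps rejecting is already unboundedly bad.

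\textbf{Case 2: $\A$ accepts $c$.} The adversary presents $\zeta$ further vertices $\ell_1,\dots,\ell_\zeta$. Each is adjacent to $c$, and the $\ell_i$ are pairwise nonadjacent. Every $\ell_i$ conflicts with the already-accepted $c$, so $\A$ must reject all of them and outputs a set of size $1$. The final graph is the star $K_{1,\zeta}$, with
\[
\varphi(G[N(c)])=\zeta
\quad\text{and}\quad
\varphi(G[N(\ell_i)])=1 .
\]
So the graph has independent kissing number exactly $\zeta$ and lies in the class. The optimum $\{\ell_1,\dots,\ell_\zeta\}$ has size $\zeta$, giving ratio $\zeta$.

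To ensure the bound is not an additive artifact, I would take $k$ disjoint copies of this gadget, presented sequentially, with each copy chosen adaptively. Disjoint union does not increase the independent kissing number, since a neighborhood lies inside one component. In every copy $\A$ gains at most $1$ while the optimum gains $\zeta$, and a copy where $\A$ rejects the center ends immediately with optimum $1$ and $\A$ gaining $0$. Hence $\OO\ge \zeta\cdot\A$ overall. The ratio is at least $\zeta$, and $\OO$ grows without bound when $\A$ accepts every center.

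The only subtlety I expect is Case 1: making the ratio well defined when $\A$ outputs nothing, and confirming that the disjoint-union construction preserves membership in the class. Both are routine, so the proof is essentially the single adaptive star construction.
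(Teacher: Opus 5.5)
Your proposal is correct and is essentially the paper's proof: reveal one vertex, stop if it is rejected, and otherwise attach $\zeta$ pairwise non-adjacent neighbours, so the algorithm keeps $1$ vertex while the optimum is $\zeta$. Your additional steps go slightly beyond the paper. The explicit check that $K_{1,\zeta}$ lies in the class is something the paper leaves implicit. The disjoint-copies amplification is not needed under the paper's definition of competitive ratio, which has no additive constant, but it is valid and shows the bound also holds for the asymptotic competitive ratio.
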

\begin{proof}
We describe an adaptive adversarial strategy against any deterministic online algorithm. The adversary constructs a graph $G$ online by revealing vertices one by one, together with their adjacencies to previously revealed vertices.

The adversary begins by revealing a single vertex $v_0$.
If the online algorithm rejects $v_0$, then no further vertices are revealed. In this case, the algorithm outputs the empty set, whereas an optimal offline algorithm selects $v_0$, resulting in an unbounded competitive ratio.
Hence, we may assume that the algorithm accepts $v_0$.

The adversary then reveals $\zeta$ additional vertices $v_1,\ldots,v_\zeta$ such that these vertices are pairwise non-adjacent and each of them is adjacent to $v_0$. Since all these vertices are adjacent to $v_0$, the algorithm cannot accept any of them. Thus, the algorithm outputs the independent set $\{v_0\}$. On the other hand, the vertices $v_1,\ldots,v_\zeta$ form an independent set.
Therefore, an optimal offline algorithm can select all $\zeta$ vertices.
The resulting competitive ratio of the algorithm is at least~$\zeta$.
This completes the proof.
\end{proof}

Generalizing the lower bound construction of Caragiannis et al.~\cite[Thm.~7]{CaragiannisFKP07} that studies the online MIS problem for disks with bounded radii (without disk representation), one can obtain the following.
\begin{restatable}{lemma}{lemLower}\label{lem_RLB}
    No randomized online algorithm can achieve a competitive ratio better than~$(\zeta+1)/2$ for the online MIS problem on graphs whose independent kissing number is at most~$\zeta$, under oblivious adversaries. 
\end{restatable}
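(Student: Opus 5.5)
We will use Yao's principle. We exhibit a probability distribution over input sequences, each a graph with independent kissing number at most $\zeta$. We then show that every deterministic online algorithm has expected value at most $2/(\zeta+1)$ times the expected optimum, or more precisely that $\mathbb{E}[\OO]/\mathbb{E}[\A]\ge(\zeta+1)/2$. By Yao's minimax principle this bounds the expected competitive ratio of any randomized algorithm against oblivious adversaries.

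\textbf{Construction.} Following the two-phase idea of Caragiannis et al., with probability $1/2$ the input is a single vertex $v_0$. With probability $1/2$ the input is $v_0$ followed by $\zeta$ pairwise non-adjacent vertices $v_1,\ldots,v_\zeta$, each adjacent to $v_0$. This graph is the star $K_{1,\zeta}$, whose independent kissing number is exactly $\zeta$ (the neighborhood of $v_0$ has independence number $\zeta$; each leaf has neighborhood $\{v_0\}$). So both inputs lie in the class.

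\textbf{Analysis.} A deterministic algorithm sees $v_0$ first and cannot distinguish the two inputs at that moment, so it either accepts or rejects $v_0$ in both.

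If it accepts $v_0$, it gains $1$ in both scenarios, since all later vertices are adjacent to $v_0$. Thus $\mathbb{E}[\A]=1$.

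If it rejects $v_0$, it gains $0$ in the first scenario and at most $\zeta$ in the second, so $\mathbb{E}[\A]\le \zeta/2$.

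The optimum is $1$ in the first scenario and $\zeta$ in the second, so $\mathbb{E}[\OO]=(1+\zeta)/2$.

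\textbf{Handling the weak case.} When $\zeta\ge 2$, the rejecting algorithm still gets ratio $\ge(\zeta+1)/\zeta$, which is too weak. We fix this by choosing the probabilities non-uniformly, or by iterating. The cleanest choice is to take scenario one with probability $p$ and scenario two with probability $1-p$. Then:
\begin{itemize}
\item accepting gives $\mathbb{E}[\A]=1$;
\item rejecting gives $\mathbb{E}[\A]\le(1-p)\zeta$;
\item $\mathbb{E}[\OO]=p+(1-p)\zeta$.
\end{itemize}
We choose $p$ to balance the two cases: setting $(1-p)\zeta=1$, i.e.\ $p=1-1/\zeta$, gives $\mathbb{E}[\OO]=2-1/\zeta$. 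That yields only ratio about $2$, which is weak as well.

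\textbf{Main obstacle.} Reaching $(\zeta+1)/2$ therefore requires a multi-level construction, and this is the step we expect to be hardest. Our approach would be to build a recursive construction: each accepted vertex spawns $\zeta$ independent children adjacent to it (but not to other present vertices). The adversary stops at a random depth, and we must ensure the whole graph stays $K_{1,\zeta+1}$-free.

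A simpler route we would try first is to apply Yao with the uniform distribution over which of $\zeta+1$ vertices of a path-like or star-like gadget is the "true" one. Take $v_0,\ldots,v_\zeta$ where, for a random index $j$, the algorithm's choices are punished. The goal is to show that any deterministic algorithm keeps expected $\le 2$ while OPT is $\zeta+1$, or keeps $1$ while OPT averages $(\zeta+1)/2$.

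Concretely, consider the sequence $v_0$, then $v_1,\ldots,v_\zeta$ revealed one at a time, each adjacent to all previously revealed vertices except that $v_i$ for $i\ge1$ are mutually independent. The input is truncated after a uniformly random prefix length $k\in\{1,\ldots,\zeta+1\}$. The algorithm gets $1$ or at most its count, while OPT is $\max(1,k-1)$. Averaging should give $(\zeta+1)/2$ up to the boundary terms, which we would verify by a direct computation over the algorithm's first acceptance time.
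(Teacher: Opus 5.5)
Your proposal has a genuine gap. The only fully specified construction is the two-scenario star, which, as you compute yourself, caps the ratio at $2-1/\zeta$. Neither multi-level construction you sketch reaches $(\zeta+1)/2$.

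The recursion in which each accepted vertex spawns $\zeta$ children depends on the algorithm's decisions, so it is an adaptive adversary. Yao's principle requires a distribution over inputs fixed in advance, independently of the algorithm.

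The truncated-star construction fails outright. The vertices $v_1,\dots,v_\zeta$ are pairwise non-adjacent and adjacent only to $v_0$, so the deterministic algorithm that rejects $v_0$ and accepts everything afterwards earns $k-1$ on the prefix of length $k$. With $k$ uniform on $\{1,\dots,\zeta+1\}$ this gives $\mathbb{E}[\A]=\zeta/2$ against $\mathbb{E}[\OO]=\zeta/2+1/(\zeta+1)$, a ratio of only $1+2/(\zeta(\zeta+1))$. The underlying problem is that a star revealed center-first contains a single trap. The algorithm forfeits at most one vertex by avoiding it and is optimal afterwards, so no distribution over such stars yields a ratio above $2$.

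The missing idea, which is exactly the paper's construction, is to replace adaptivity by a hidden fair coin that guesses which vertex the algorithm took. This guess is repeated over $\zeta$ levels with two candidates per level.
The graph $G_\zeta$ has levels $1,\dots,\zeta$, each consisting of two non-adjacent vertices. When level $i\ge 2$ is revealed, a fair coin selects one vertex $c_{i-1}$ of level $i-1$. Both level-$i$ vertices are then made adjacent to $c_{i-1}$ and to all previously selected vertices $c_1,\dots,c_{i-2}$. The algorithm must decide on level $i$ before learning which of its vertices becomes $c_i$, and accepting $c_i$ blocks every later vertex.

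A backward induction over levels shows that every deterministic algorithm gains at most $2$ in expectation. Accepting one vertex at a level $i<\zeta$ yields at most $1+\tfrac12\cdot 2$, accepting both yields $2$ and blocks the rest, and the last level yields at most $2$. Meanwhile, the $\zeta-1$ unselected vertices together with both level-$\zeta$ vertices always form an independent set of size $\zeta+1$. This is precisely the target you stated.

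You would still need to check membership in the class. The selected vertices form a clique, and $c_j$ is adjacent to all of levels $j+1,\dots,\zeta$. Hence the neighborhood of largest independence number is $N(c_1)$, whose maximum independent sets have size exactly $\zeta$ (for example, the unselected vertices of levels $2,\dots,\zeta-1$ together with both level-$\zeta$ vertices).
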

\begin{proof}
Here the adversarial construction follows~\cite[Thm.~7]{CaragiannisFKP07}, though our verification that $G_\zeta$ has independent kissing number at most $\zeta$ (Claim~\ref{clm_zeta}) replaces their geometric argument with a short combinatorial one, generalizing beyond unit disk graphs.

We construct an oblivious adversary that generates a graph $G_\zeta$  with the independent kissing number at most~$\zeta$. We then show that $G_\zeta$ contains an IS of size $\zeta+1$, while the expected size of the IS produced by any randomized online algorithm is at most 2.

The graph $G_\zeta$ is generated online and consists of $\zeta$ levels. For each $i\in[\zeta]$, the level~$i$ contains two vertices, denoted by $v_i^{\ell}$ and $v_i^r$, which are not adjacent to each other. The adversary first reveals the two vertices of level $1$. For each $i\in[\zeta]\setminus\{1\}$, the vertices of level $i$ are revealed only after all vertices of level $i-1$ have been revealed.

For each level $i\in[\zeta]\setminus\{1\}$, the adversary tosses a fair coin to determine how the vertices of level $i$ are connected to the vertices of previous levels. If head comes up, both vertices $v_i^{\ell}$ and $v_i^r$ are connected to vertex $v_{i-1}^{\ell}$ and to all vertices of levels $1,\ldots,i-2$ to which $v_{i-1}^{\ell}$ is connected. Otherwise, both vertices $v_i^{\ell}$ and $v_i^r$ are connected to vertex $v_{i-1}^r$ and to all vertices of levels $1,\ldots,i-2$ to which $v_{i-1}^r$ is connected.
Consider the set consisting of the two vertices of level $\zeta$ and, for each $i\in[\zeta-1]$, the unique vertex of level $i$ that is not connected to any vertex of a higher level. By construction, this set is an IS of $G_\zeta$. Hence, the size of a MIS of $G_\zeta$ has size at least $\zeta+1$.
    
We now bound the expected size of an independent set computed by any online algorithm on $G_\zeta$. By Yao’s minimax principle~\cite{Yao77}, it suffices to consider deterministic online algorithms that know the probability distribution used by the adversary. Fix such a deterministic algorithm. The algorithm can accept at most one vertex from each level, except possibly from the last level, where it may accept both vertices. For each level $i$, let $X_i$ be the indicator random variable that equals $1$ if the algorithm accepts a vertex from level $i$, and $0$ otherwise.

We now analyze the probability that the algorithm successfully accepts a vertex from a given level.
For any level $i$, if the algorithm accepts a vertex at level $i$, this vertex remains feasible only if, for every earlier level $j<i$, the adversary’s coin toss at level $j$ selects the branch that does not connect level $j+1$ to this vertex. Since the coin
tosses at different levels are independent and each outcome occurs with probability $1/2$, the probability that a vertex accepted at level $i$ survives all earlier levels is at most $2^{-(i-1)}$.
Therefore, we have
\[
\Pr[X_i = 1] \le 2^{-(i-1)}.
\]

Summing over all levels, the algorithm can accept at most one vertex from each level $i<\zeta$ and at most two vertices from level $\zeta$. 
Hence, the expected number of vertices accepted by the algorithm is at most
\[
\sum_{i=1}^{\zeta-1} \Pr[X_i=1] + 2\,\Pr[X_\zeta=1]
\le
\sum_{i=1}^{\zeta-1} 2^{-(i-1)} + 2 \cdot 2^{-(\zeta-1)}
< 2.
\]
Combining the above, the expected competitive ratio of any randomized online algorithm on $G_\zeta$ is at least $(\zeta+1)/2$.

    Now it remains to show that the graph $G_\zeta$ is a graph with the independent kissing number at most $\zeta$.
    \begin{clm}\label{clm_zeta}
        Graph $G_\zeta$ is a graph with independent kissing number at most $\zeta$.
    \end{clm}
    \begin{proof}
        Fix any vertex $v \in V(G_\zeta)$. By construction, the neighbors of $v$ include at most one vertex from each level, and any two neighbors from the same level are not adjacent. Hence, the neighborhood of $v$ contains no independent set of size greater than $\zeta$, implying that the independent kissing number of $G_\zeta$ is at most $\zeta$, as required.
    \end{proof}
    Hence the lemma follows.
\end{proof}

Applying~\Cref{lem_RLB} to the families of geometric objects for which the bound on the independent kissing number is known due to ~\cite[Thm.~7]{DeKS26}, we obtain the following
corollary that summarizes the implications of our randomized lower bounds in the absence of geometric representations.

\begin{corollary}
When the geometric representation of objects is not available, no randomized online
algorithm can achieve a competitive ratio better than the following
\begin{itemize}
    \item $6.5$ for the online MIS problem on unit ball graphs in
    $\mathbb{R}^3$;
    \item $2^{d-1}+1/2$ for the online MIS problem on geometric intersection graphs of translated copies of a hypercube in $\IR^d$;
    \item $3$ for the online MIS problem on geometric intersection graphs of translated copies of a regular $k$-gon $(k\in\mathbb{Z}^{+}\setminus\{1,2,4\})$
    \item $2^d+1/2$ for the online MIS problem on geometric intersection graphs of congruent hypercubes in $\IR^d$;
    \item $o(\alpha\cdot M^d)$ for the online MIS problem on geometric intersection graphs of $\alpha$-fat objects in $\IR^d$ having widths in the interval $[1,M]$.
\end{itemize}
\end{corollary}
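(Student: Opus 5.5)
The corollary is a direct instantiation of \Cref{lem_RLB}. That lemma says that on any class with independent kissing number at most $\zeta$, no randomized algorithm beats $(\zeta+1)/2$ against oblivious adversaries. This holds provided the adversary's graph $G_\zeta$ can actually be realized within the class, because without the representation the algorithm sees only the graph. So the plan has two steps for each family. First, take the value of $\zeta$ given by \cite[Thm.~7]{DeKS26} and substitute it into $(\zeta+1)/2$. Second, argue that $G_\zeta$, the random $\zeta$-level graph of \Cref{lem_RLB}, is realizable by objects of that family.

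For the arithmetic:
\begin{itemize}
\item Unit balls in $\IR^3$ have $\zeta=12$, giving $13/2=6.5$.
\item Translates of a hypercube in $\IR^d$ have $\zeta=2^d$, giving $2^{d-1}+1/2$.
\item Translates of a regular $k$-gon with $k\notin\{1,2,4\}$ have $\zeta=5$, giving $3$.
\item Congruent hypercubes have $\zeta=2^{d+1}$, giving $2^d+1/2$.
\item $\alpha$-fat objects with widths in $[1,M]$ have $\zeta=\Omega((\alpha M)^d)$, which yields the stated polynomial bound in $M$.
\end{itemize}
The last bullet uses the lower bound on $\zeta$ quoted in \Cref{sec_contri}, rather than an upper bound.

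The main obstacle is the realizability step. \Cref{lem_RLB} as proved only shows that $G_\zeta$ has independent kissing number at most $\zeta$. It does not show that $G_\zeta$ is, say, a unit ball graph. For these specific families I would realize the construction geometrically, in the style of \cite[Thm.~7]{CaragiannisFKP07}. Take an extremal configuration: a central object $c$ together with $\zeta$ pairwise disjoint objects $u_1,\dots,u_\zeta$, each touching $c$. The adversary then places the level-$i$ pair so that the objects chosen by past coin tosses shrink toward the extremal touching positions. At each level, the two vertices are two nearby copies placed near the next "petal" position. Each copy must intersect exactly the surviving chain of earlier objects, namely those on the branch picked by the coin, and must avoid the other earlier objects.

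Concretely, I would place the two level-$i$ objects as slight perturbations of the configuration around $u_i$. I would then use general position with small $\varepsilon$-shifts, decreasing geometrically in $i$, so that the required adjacencies (intersecting exactly the ancestor set) hold. This is where the care goes. I expect the strict-inequality slack in the kissing configurations to make it work for balls, hypercubes and $k$-gons. If it fails for some family, the fallback is to state the bound for the family of graphs with that independent kissing number, which \Cref{lem_RLB} already gives verbatim.
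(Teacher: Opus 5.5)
Your first step is exactly the paper's proof. The paper does nothing beyond substituting the values of $\zeta$ from \cite[Thm.~7]{DeKS26} into \Cref{lem_RLB}, and your values $\zeta=12,\,2^d,\,5,\,2^{d+1}$ reproduce the first four bullets. You are also right that this only gives the bounds for the abstract class of graphs with independent kissing number at most $\zeta$. The corollary needs $G_\zeta$ itself to be, e.g., a unit ball graph. Neither \Cref{lem_RLB}, whose proof deliberately replaces the geometric argument of \cite{CaragiannisFKP07} by a combinatorial one, nor the paper checks this. So realizability carries the whole statement, and it is exactly what your proposal leaves open. Your fallback only restates \Cref{lem_RLB} and does not prove the corollary.

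The realization you sketch does not work as described. Write $w_i$ for the level-$i$ vertex selected by the next coin and $u_i$ for the other one.
\begin{itemize}
\item The two are non-adjacent, so for congruent objects they cannot be two nearby perturbations of one petal position. Unit balls and translates also cannot ``shrink''.
\item More importantly, they play opposite roles. $w_i$ is adjacent to every later vertex, including the $\zeta-i+1$ pairwise non-adjacent vertices $u_{i+1},\dots,u_{\zeta-1},v^\ell_\zeta,v^r_\zeta$, while $u_i$ is adjacent to no later vertex.
\end{itemize}
So $w_i$ must be a central hub, not a petal. For example, take unit balls with $w_1$ centered at the origin and suppose $w_2$ touches $w_1$. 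Every center in their common neighbourhood then lies in an open half-space through the origin. Centers of pairwise disjoint unit balls meeting $w_1$ subtend pairwise angles $>60^\circ$ at the origin. Hence at most $9$ such balls fit there (the one-sided kissing number of $\IR^3$), whereas $11$ are required.

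The missing idea is as follows. The adversary is oblivious and the algorithm sees only adjacencies, so the embedding may depend on all coin tosses. Moreover, every outcome gives the same graph up to relabeling: the $w_i$ form a clique, $u_j$ is adjacent exactly to $w_1,\dots,w_{j-1}$, and the last pair is adjacent exactly to all $w_i$. It therefore suffices to realize this single graph. For unit balls:
\begin{itemize}
\item Take $12$ pairwise disjoint petals in icosahedral position with centers at distance $r=1.95$ from the origin, and use them as $u_2,\dots,u_{11},v^\ell_{12},v^r_{12}$. Put $u_1$ far away.
\item Center $w_i$ at $s_ih$ on a generic ray, with $0=s_1<\dots<s_{11}=2$.
\item The ball at $sh$ meets petal $p$ iff $\cos\angle(p,h)\ge (s^2+r^2-4)/(2sr)$. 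This threshold increases with $s$, so the hubs lose petals one at a time in order of decreasing angle.
\item If $h$ is about $5^\circ$ from an icosahedron vertex towards a neighbour, exactly two petals survive at $s=2$.
\end{itemize}
A comparable nested-hub construction has to be supplied for each of the other families. For the last bullet, your argument gives $\Omega((\alpha M)^d)$, not the expression ``$o(\alpha\cdot M^d)$'' in the statement, which is not a meaningful lower bound.
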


\section{Randomization with Geometric Representation: For Unit Balls in $\IR^3$}\label{sec_disk}

{In this section, we prove~\Cref{thm:rand-alpha}, which shows that randomization, together with access to the geometric representation, allows one to surpass the deterministic lower bound for the online MIS problem on unit ball graphs in $\mathbb{R}^3$.}
Caragiannis et al.~\cite{CaragiannisFKP07} showed that randomization with access to the geometric representation can surpass the deterministic lower bound of $5$ for unit disk graphs.
Their approach is based on a fixed triangular lattice/grid with well-separated points. The periodic structure of the lattice induces a natural partition of the plane into rectangles $R$ of the same size. 
The algorithm applies a uniform random shift to the lattice and processes the input with respect to the shifted lattice. Let $U$ be the union of unit disks centered at the shifted lattice points. An input disk is accepted if its center lies in $U$.
A key \emph{area preserving property} of this construction is that for every region $R$, we have $\mathrm{Area}(R \cap U)=\mathrm{Area}(\text{unit disk})=\pi$.
Accepted disks associated with the same lattice point overlap and form a clique, while disks associated with different lattice points are independent. Running $\AGDS$ on the accepted disks yields an expected competitive ratio of $4.41$.

{  Though a natural  extension of the lattice/grid from two to three dimension is easy to obtain, generalizing their approach from unit disk graphs to unit ball graphs in $\IR^3$ faces  challenges.
\begin{itemize}
  \item[\textbf{Ch1.}] \textbf{Preserving key volumetric properties.}
In higher dimensions, the extension of the key area preserving property is no longer immediate, as the interactions between balls centered at different lattice points become more complicated. In particular, for each suitable hyper-rectangle $R$,  it is not obvious that $\Vol(R \cap U)=\Vol(\text{unit ball in $\IR^3$})$. Establishing this exact identity, formalized in~\Cref{cor-grid-alpha-3d} is crucial for the analysis of the algorithm.

\item[\textbf{Ch2.}] \textbf{Efficiently testing acceptance under random shifts.}
Beyond the volumetric challenge, the randomized approach also raises an algorithmic challenge of  efficiently determining whether the center of an input unit ball in $\IR^3$ lies in $U$. While this condition is implicit in the work of Caragiannis et al.~\cite{CaragiannisFKP07}, making it explicit for higher dimension requires new insight. In~\Cref{lem_verify}, we show that for a given center $c$ of the input unit ball, a closest lattice point $p$ can be computed in constant time. Then the acceptance can be tested by checking whether $\dist(p, c) \le 1$, which can be done in constant time.
\end{itemize}}

Before presenting our algorithm and its analysis, we introduce a lattice and establish its key geometric and volume preserving properties, which play a central role in both its design and analysis.

\subsection{A Lattice and its Properties}\label{sec_lattice}

\smallskip\noindent
\textbf{\underline{Lattice:}} 
Let us define $d$ linearly independent vectors ${\bf v}_1,{\bf v}_2,\cdots,{\bf v}_d$ as follows.
Let ${\bf v}_1=(4+\delta){\bf e}_1$, and for all $i\in[d]\setminus\{1\}$, let ${\bf v}_i =-(2+\delta/2){\bf e}_1+2\sqrt{3}{\bf e}_i$, where, $\delta>0$ is an arbitrarily small fixed constant and ${\bf e}_1,{\bf e}_2,{\bf e}_3$ are the standard unit vectors in $\IR^d$. 
 The lattice generated by these vectors is
$$\Lambda_d=\{\alpha_1 {\bf v}_1+\alpha_2 {\bf v}_2+\cdots+\alpha_d {\bf v}_d\ |\ (\alpha_1,\alpha_2,\cdots,\alpha_d)\in \mathbb{Z}^d\}.$$ For notational simplicity, we denote $\Lambda_3$ by $\Lambda$.  A visualization of the lattice $\Lambda$ is shown in~\Cref{fig:lattice}. 

\begin{figure}[htbp]
    \centering
\includegraphics[page=2,scale=0.9]{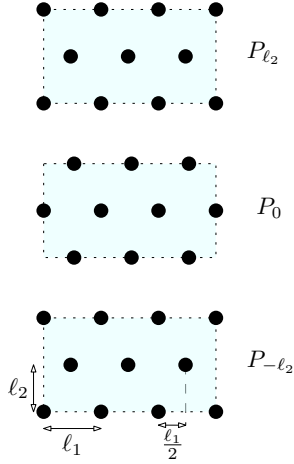}
       \caption{Points of $\Lambda$ are drawn. Here $\ell_1=4+\delta$, and $\ell_2=2\sqrt{3}$. The projections of planes $P_{\ell_2}$ and $P_{0}$, $P_{-\ell_2}$ over a rectangular region are depicted, where $P_k$ denotes the plane parallel to $xy$ plane with $z$-coordinate value $k$. There is no lattice point in between any of these two planes. }
       \label{fig:lattice}
\end{figure}

\smallskip\noindent
\textbf{\underline{Properties of the lattice:}}
\begin{restatable}{lemma}{lemDistLambda}\label{obs-lattice-Lambda}
The distance between any two distinct points in the lattice $\Lambda$ is strictly greater than~$4$.
\end{restatable}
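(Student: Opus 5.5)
The plan is to reduce the claim to a lower bound on the length of every nonzero lattice vector and then split into cases by integrality. Since $\Lambda$ is a lattice, the difference of two distinct points of $\Lambda$ is a nonzero vector of $\Lambda$. So it suffices to show that every nonzero ${\bf w}=a{\bf v}_1+b{\bf v}_2+c{\bf v}_3$ with $(a,b,c)\in\mathbb{Z}^3\setminus\{0\}$ satisfies $\|{\bf w}\|^2>16$.

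First I would write ${\bf w}$ in coordinates. Using $4+\delta=2(2+\delta/2)$,
\[
{\bf w}=\bigl((2+\tfrac{\delta}{2})(2a-b-c),\ 2\sqrt{3}\,b,\ 2\sqrt{3}\,c\bigr),
\]
and hence
\[
\|{\bf w}\|^2=(2+\tfrac{\delta}{2})^2(2a-b-c)^2+12(b^2+c^2).
\]
I would then argue according to the value of $b^2+c^2$.

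\textbf{Case $b=c=0$.} Then $a\neq 0$, so $\|{\bf w}\|^2=(4+\delta)^2a^2\ge(4+\delta)^2>16$.

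\textbf{Case $b^2+c^2\ge 2$.} The second term alone gives $\|{\bf w}\|^2\ge 24>16$.

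\textbf{Case $b^2+c^2=1$.} Exactly one of $b,c$ equals $\pm1$ and the other is $0$. Then $b+c$ is odd, so $2a-b-c$ is odd and in particular nonzero, giving $(2a-b-c)^2\ge1$. Therefore $\|{\bf w}\|^2\ge(2+\delta/2)^2+12>4+12=16$.

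In every case $\|{\bf w}\|>4$, which proves the lemma. The only delicate step is this last case, where the $\delta>0$ slack together with the parity observation is exactly what makes the inequality strict. Everything else is routine arithmetic.
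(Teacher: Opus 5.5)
Your proof is correct and is essentially the paper's argument: both bound the squared distance coordinate by coordinate, using that the first coordinate of a lattice vector is $(2+\delta/2)(2a-b-c)$, so its parity is tied to that of $b+c$, and both rely on the slack $\delta>0$ for the strict inequality. Your explicit formula $\|{\bf w}\|^2=(2+\delta/2)^2(2a-b-c)^2+12(b^2+c^2)$ with the case split on $b^2+c^2$ is cleaner than the paper's version, because it correctly handles vectors such as ${\bf v}_1+{\bf v}_2+{\bf v}_3=(0,2\sqrt{3},2\sqrt{3})$. For such vectors the paper's assertion that a $2\sqrt{3}$ offset in coordinate $2$ or $3$ forces a nonzero first-coordinate offset is false, although the bound still holds since $\sqrt{24}>4$.
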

\begin{proof}
Let $p_1$ and $p_2$ be two distinct points in $\Lambda$. Since $p_1 \neq p_2$, they differ in at least one coordinate. 
If the difference in any coordinate is at least $4+\delta$, then  $\dist(p_1,p_2) > 4$ and we are done.
Otherwise, by the construction of the lattice $\Lambda$, the smallest possible nonzero differences between corresponding coordinates are as follows: in the first coordinate, the difference is at least $2+\delta/2$, and in each of the remaining coordinates it is at least $2\sqrt{3}$.
Suppose first that $p_1$ and $p_2$ differ in first coordinate  by $2+\delta/2$. Then, by the definition of $\Lambda$, they must also differ in at least one other coordinate by at least $2\sqrt{3}$. Similarly, if they differ in some coordinate $i \in \{2,3\}$ by $2\sqrt{3}$, then they must differ in the first coordinate by at least $2+\delta/2$. In either case, the squared Euclidean distance between $p_1$ and $p_2$ is at least
$(2+\delta/2)^2 + (2\sqrt{3})^2 > 16 $
and hence $\dist(p_1,p_2) > 4$. This completes the proof.
\end{proof}

\subparagraph*{\underline{Key volumetric property:}} The following theorem establishes the volume-preserving property that plays a crucial role in the proof of~\Cref{thm:rand-alpha}.
\begin{theorem}\label{cor-grid-alpha-3d}
Let $\sigma$ be a unit ball in $\mathbb{R}^3$, and let $U=\cup_{p\in\Lambda} \sigma(p)$ denote the union of translated copies of $\sigma$ centered at all the lattice points of $\Lambda$.
Fix $l=(l(x_1),l(x_2),l(x_3))\in\mathbb{R}^3$ and define the axis-aligned box $R=\Big[l(x_1),l(x_1)+4+\delta\Big]\times\prod_{i=2}^3\Big[l(x_i),l(x_i)+2\sqrt{3}\Big]$, such that the side length along the $x_1$-axis is $4+\delta$ and the side lengths along the $x_2$- and $x_3$-axes are $2\sqrt{3}$. 
Then, we have $\Vol(R\cap U)=\Vol(\sigma)$.
\end{theorem}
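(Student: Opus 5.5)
The plan is to show that $R$ is a fundamental domain of $\Lambda$ up to measure zero, and then use that the balls $\sigma(p)$, $p\in\Lambda$, are pairwise disjoint.

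\textbf{Step 1: $R$ tiles space under $\Lambda$.} The fundamental cell of $\Lambda$ has volume
\[
|\det({\bf v}_1,{\bf v}_2,{\bf v}_3)| = (4+\delta)(2\sqrt3)^2 ,
\]
since the matrix is triangular with this diagonal once the ${\bf e}_1$-components of ${\bf v}_2,{\bf v}_3$ are moved off the diagonal. This equals $\Vol(R)$. It therefore suffices to show that the translates $R+p$, $p\in\Lambda$, have pairwise disjoint interiors, or alternatively that they cover $\IR^3$.

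To do this, I will write a lattice point as $p=\alpha_1{\bf v}_1+\alpha_2{\bf v}_2+\alpha_3{\bf v}_3$. Its coordinates are
\[
p(x_2)=2\sqrt3\,\alpha_2,\qquad p(x_3)=2\sqrt3\,\alpha_3,\qquad p(x_1)=(4+\delta)\alpha_1-(2+\delta/2)(\alpha_2+\alpha_3).
\]
The slabs in $x_2$ and $x_3$ of width $2\sqrt3$ are indexed exactly by $\alpha_2$ and $\alpha_3$. Fix $\alpha_2,\alpha_3$. Then $p(x_1)$ ranges over a coset of $(4+\delta)\mathbb Z$, and its intervals of length $4+\delta$ tile the line. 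Hence the translates $R+p$ tile $\IR^3$, with overlaps only on boundaries. This is just the sheared-brick tiling.

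\textbf{Step 2: the balls are disjoint.} By \Cref{obs-lattice-Lambda}, distinct lattice points are at distance more than $4>2$. So the unit balls $\sigma(p)$ are pairwise disjoint, and $U$ is a disjoint union.

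\textbf{Step 3: the folding argument.} The key computation is
\[
\Vol(R\cap U)=\sum_{p\in\Lambda}\Vol(R\cap\sigma(p))=\sum_{p\in\Lambda}\Vol\big((R-p)\cap\sigma(0)\big).
\]
Here the first equality uses Step 2 and the second is translation by $-p$. Since the sets $R-p$ ($p\in\Lambda$) tile $\IR^3$ up to measure zero by Step 1, the last sum equals $\Vol(\sigma(0))=\Vol(\sigma)$. The sums are finite, since only finitely many $p$ give a nonempty intersection, so there are no convergence issues.

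The main obstacle is Step 1, namely verifying that $R$ really is a fundamental domain of this non-orthogonal lattice. The obstruction is that the $x_1$-offsets $-(2+\delta/2)(\alpha_2+\alpha_3)$ shift from layer to layer. My approach handles this through the layer-by-layer argument: fixing $(\alpha_2,\alpha_3)$ determines the $x_2x_3$-slab uniquely, and within that slab the $x_1$-translates form a full coset of $(4+\delta)\mathbb Z$. I will check carefully that the map from lattice points to slabs is a bijection. This holds because ${\bf v}_1$ has zero $x_2$- and $x_3$-components.
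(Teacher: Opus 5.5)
Your proof is correct, but it takes a different route from the paper's.

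\textbf{The paper's route.} The paper does not use the fact that $R$ is a fundamental domain of $\Lambda$. Instead it splits $R=R_0\cup R_1$ into two boxes of $x_1$-length $2+\delta/2$. It first treats the special position in which a corner of $R$ is a lattice point. There, each of $R_0$ and $R_1$ has exactly four lattice corners. Each such corner contributes one octant of a unit ball, and the eight octants are distinct, so the total is $\Vol(\sigma)$. The paper then shows that $\Vol(R\cap U)$ is invariant under translation one axis at a time:
\begin{itemize}
\item along $x_1$, because $4+\delta$ is the axial period;
\item along $x_2$ and $x_3$, where $2\sqrt3$ is only half the axial period $4\sqrt3$, by replacing $R_1$ with its lattice translate $R_1+{\bf v}_i=R_0+2\sqrt3\,{\bf e}_i$, so that $R_0\cup R_i$ spans a full period.
\end{itemize}

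\textbf{What your route buys.} Your covolume computation, the sheared-brick tiling and the folding identity handle every position of $R$ at once. You therefore need neither the corner base case nor the invariance step. Your argument also uses only that distinct lattice points are at distance at least $2$, so that the balls overlap in measure zero. It carries over verbatim to $\Lambda_d$ for every $d$, which the paper only asserts in a remark.

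\textbf{What the paper's route buys.} It is more hands-on about which pieces of which balls lie in $R$. The cost is several case-based claims that your argument makes unnecessary, for example that no non-corner lattice point's ball meets $R_0$.

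A minor point: once you verify the tiling directly layer by layer, the determinant computation is redundant.
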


We first present a few definitions, observations, and claims related to the structure of the lattice $\Lambda$ that we need in proving~\Cref{cor-grid-alpha-3d}.

The lattice $\Lambda$ is periodic along the coordinate axes: its period along the $x_1$-axis is $4+\delta$, and along each of the $x_2$- and $x_3$-axes it is $4\sqrt{3}$.
In particular, if two lattice points $p_1,p_2\in \Lambda$ satisfy $|p_1(x_i)-p_2(x_i)|=c\neq 0$ for some $i\in\{1,2,3\}$ and $|p_1(x_j)-p_2(x_j)|=0$ for all $j\neq i$, then $c$ must be an integral multiple of the period of $\Lambda$ along the $i$th axis. 

We say that an axis-aligned box $H=\prod_{i=1}^3[l(x_i),u(x_i)]$ is \emph{periodic in the $i$th axis} with respect to $\Lambda$ if the length of $[l(x_i),u(x_i)]$ is an integral multiple of the period of $\Lambda$ along that axis.

\begin{observation}\label{obs:invar-3d}
If $H$ is periodic in the $i$th axis with respect to $\Lambda$, then the volume of $(H\cap U)$ is invariant under translations of $H$ along the $i$th axis. That is,
$ \Vol(H\cap U)=\Vol((H+v{\bf e}_i)\cap U)$ for every $v\in\mathbb{R}$ and every $i\in[3]$.
\end{observation}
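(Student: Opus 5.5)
The plan is to use the periodicity of $U$ together with a one-dimensional sliding argument. First I will note that $U$ is invariant under translation by $T{\bf e}_i$, where $T$ is the period of $\Lambda$ along the $i$th axis. Indeed, $T{\bf e}_i$ is itself a lattice vector: for $i=1$ it is ${\bf v}_1$, and for $i\in\{2,3\}$ it is $2{\bf v}_i+{\bf v}_1=4\sqrt{3}\,{\bf e}_i$. So $\Lambda+T{\bf e}_i=\Lambda$, and therefore $U+T{\bf e}_i=U$. Set $f(x)=\mathbf{1}_{U}(x)$, so that $f(x+T{\bf e}_i)=f(x)$ for all $x$.

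Next I write $H=\prod_j[l(x_j),u(x_j)]$ with $u(x_i)-l(x_i)=kT$ for some positive integer $k$. By Fubini,
\[
\Vol((H+v{\bf e}_i)\cap U)=\int_{y}\int_{l(x_i)+v}^{u(x_i)+v} f(y,t)\,dt\,dy,
\]
where $y$ ranges over the product of the remaining $[l(x_j),u(x_j)]$, $j\neq i$, and $t$ is the $i$th coordinate. For fixed $y$, the map $t\mapsto f(y,t)$ is $T$-periodic and bounded, so its integral over any interval of length $kT$ is independent of the starting point. The standard argument is to differentiate in $v$: the derivative equals $f(y,u+v)-f(y,l+v)=0$ for almost every $v$, and the integral is Lipschitz in $v$. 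Alternatively, one can split the interval at the nearest multiple of $T$ and shift one piece by $kT$. Integrating over $y$ then gives the claim.

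The only potential obstacle is confirming that the period along each axis is realized by a genuine lattice vector, so that $U$ is truly invariant; the computation $2{\bf v}_i+{\bf v}_1$ settles this. Measurability is automatic, since $U$ is a locally finite union of closed balls and hence closed.
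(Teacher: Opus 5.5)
Your proof is correct and is essentially the argument the paper takes for granted. The paper states this as an observation without proof, relying only on the preceding remark that $\Lambda$ has period $4+\delta$ along $x_1$ and $4\sqrt{3}$ along $x_2,x_3$. Your check that $T{\bf e}_i$ is a genuine lattice vector (${\bf v}_1$, resp.\ ${\bf v}_1+2{\bf v}_i=4\sqrt{3}\,{\bf e}_i$), so that $U+T{\bf e}_i=U$, together with the Fubini sliding argument, supplies the details the paper leaves implicit.
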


%

Let $$R_0=\Big[l(x_1),l(x_1)+(2+\delta/2)\Big]\times\prod_{i=2}^3\Big[l(x_i),l(x_i)+2\sqrt{3}\Big]$$ be an axis-aligned box in $\IR^3$, and let $R_1=R_0+(2+\delta/2){\bf e}_1$ be the box obtained by translating $R_0$ by $2+\delta/2$ along the $x_1$-axis. Clearly, $R=R_0\cup R_1$.  

\begin{clm}\label{clm_sure_corner}
If a corner point of $R$ coincides with a lattice point in $\Lambda$, then there exist corners of both $R_0$ and $R_1$ that coincide with points in $\Lambda$; see~\Cref{fig_clm_sure_corner}.
\end{clm}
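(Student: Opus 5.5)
The argument is a direct case analysis on which corner of $R$ is the lattice point, using only the generating vectors of $\Lambda$.

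\textbf{Setup.} Write $\ell_1=4+\delta$, $h=2+\delta/2=\ell_1/2$ and $\ell_2=2\sqrt{3}$. The corners of $R$ are the points
\[
\bigl(l(x_1)+\epsilon_1\ell_1,\ l(x_2)+\epsilon_2\ell_2,\ l(x_3)+\epsilon_3\ell_2\bigr),\qquad \epsilon_i\in\{0,1\}.
\]
Because $R=R_0\cup R_1$ with $R_1=R_0+h{\bf e}_1$, the corners of $R_0$ are the points with first coordinate $l(x_1)$ or $l(x_1)+h$. The corners of $R_1$ are those with first coordinate $l(x_1)+h$ or $l(x_1)+\ell_1$. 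In both cases the last two coordinates range over the same set as for $R$.

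\textbf{Case 1: $\epsilon_1=0$.} Let $q=(l(x_1),\,l(x_2)+\epsilon_2\ell_2,\,l(x_3)+\epsilon_3\ell_2)\in\Lambda$. Then $q$ is already a corner of $R_0$. Choose $\epsilon_2'=1-\epsilon_2$, so that $\epsilon_2'\ell_2-\epsilon_2\ell_2=\pm\ell_2$.
\begin{itemize}
\item If $\epsilon_2=0$, add ${\bf v}_2+{\bf v}_1=h{\bf e}_1+\ell_2{\bf e}_2$ to $q$.
\item If $\epsilon_2=1$, subtract ${\bf v}_2$, i.e.\ add $-{\bf v}_2=h{\bf e}_1-\ell_2{\bf e}_2$.
\end{itemize}
Either way the resulting point lies in $\Lambda$, has first coordinate $l(x_1)+h$, and has second coordinate $l(x_2)+\epsilon_2'\ell_2$, which is still an endpoint of the $x_2$-interval. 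Its third coordinate is unchanged. So it is a corner of $R_1$ lying in $\Lambda$.

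\textbf{Case 2: $\epsilon_1=1$.} Now $q$ is a corner of $R_1$ with first coordinate $l(x_1)+\ell_1$. Apply the symmetric moves, namely subtract ${\bf v}_1+{\bf v}_2$ or add ${\bf v}_2$, with the sign again chosen by $\epsilon_2$. This shifts the first coordinate by $-h$ and the second coordinate by $\mp\ell_2$, keeping it inside $\{l(x_2),l(x_2)+\ell_2\}$. The result is a lattice point at a corner of $R_0$.

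\textbf{Main obstacle.} The only delicate point is keeping the second coordinate at an endpoint of the $x_2$-interval. This is exactly why the direction of the $\pm\ell_2$ step must be chosen according to $\epsilon_2$. One could equally use ${\bf v}_3$ and $\epsilon_3$ instead. Everything else is a direct check that ${\bf v}_1+{\bf v}_2$ and $\pm{\bf v}_2$ belong to $\Lambda$.
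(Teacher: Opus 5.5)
Your proof is correct and is essentially the paper's argument. From the lattice corner you step by $\pm(2+\delta/2){\bf e}_1 \pm 2\sqrt{3}{\bf e}_2$, choosing the $x_2$-sign so the second coordinate stays at an endpoint of its interval, and land on a lattice corner of the other half-box. You also make explicit two things the paper leaves implicit: that these steps are the lattice vectors ${\bf v}_1+{\bf v}_2$ and $-{\bf v}_2$ (and their negatives), and the $R_1$-corner case, which you treat directly rather than by the paper's ``without loss of generality.''
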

\begin{proof}
   Let $x$ be a corner point of $R$ such that $x\in\Lambda$.
   Since $R=R_0\cup R_1$, the point $x$ must be a corner of either $R_0$ or $R_1$. Without loss of generality, assume that $x$ is a corner of $R_0$. By the construction of $R_0$ and $R_1$, 
 exactly one of the points
$$
x + \big((2+\delta/2){\bf e}_1 + 2\sqrt{3}{\bf e}_2\big)
\quad\text{or}\quad
x + \big((2+\delta/2){\bf e}_1 - 2\sqrt{3}{\bf e}_2\big)
$$
is a corner point of $R_1$. Moreover, both of these points belong to $\Lambda$ whenever
$x\in\Lambda$. Hence, $R_1$ has a corner that coincides with a lattice point, as required. \myqed
\end{proof}

\begin{figure}[htbp]
    \centering
    \includegraphics[page=2,width=0.25\linewidth]{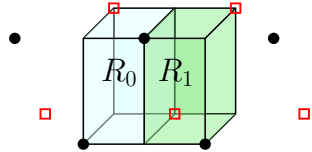}
    \caption{Illustration of~\Cref{clm_sure_corner}. The box $R$ is decomposed into $R_0$ and its translate $R_1$ along the $x_1$-axis. If a corner of $R$ coincides with a lattice point in $\Lambda$ (where $\Lambda$ denotes the set of lattice points), then the construction ensures that both $R_0$ and $R_1$ have corners that also coincide with lattice points. The black points denote lattice points in the plane at $z=0$, while red points denote the lattice points in the plane at $z=2\sqrt{3}$.}
    \label{fig_clm_sure_corner}
\end{figure}

 \begin{observation}\label{clm:corner}
 If a corner point of $R_0$ coincides with a lattice point in $\Lambda$, then exactly four corners of $R_0$ coincide with points of~$\Lambda$; see~\Cref{fig_clm_sure_corner}.
\end{observation}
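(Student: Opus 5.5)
The plan is to translate the corner that lies in $\Lambda$ to the origin, write every other corner of $R_0$ explicitly, and reduce the question ``is this corner in $\Lambda$?'' to a parity condition on three bits. Since $\Lambda$ is a lattice, it is invariant under translation by any of its points. So if $x\in\Lambda$ is a corner of $R_0$, a corner $y$ of $R_0$ lies in $\Lambda$ if and only if $y-x\in\Lambda$.

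First I would describe the corners explicitly. The side lengths of $R_0$ are $2+\delta/2$ along the $x_1$-axis and $2\sqrt3$ along the $x_2$- and $x_3$-axes. Hence every corner of $R_0$ has the form
\[
y=x+\bigl(s_1a(2+\delta/2),\; 2\sqrt3\, s_2b,\; 2\sqrt3\, s_3c\bigr),\qquad a,b,c\in\{0,1\}.
\]
Here the signs $s_1,s_2,s_3\in\{\pm1\}$ are fixed by which corner $x$ is, and the eight triples $(a,b,c)$ give the eight corners.

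Next I would characterize membership in $\Lambda$. A general lattice point is
\[
\alpha_1{\bf v}_1+\alpha_2{\bf v}_2+\alpha_3{\bf v}_3=\bigl((4+\delta)\alpha_1-(2+\delta/2)(\alpha_2+\alpha_3),\;2\sqrt3\,\alpha_2,\;2\sqrt3\,\alpha_3\bigr).
\]
Matching the second and third coordinates of $y-x$ forces $\alpha_2=s_2b$ and $\alpha_3=s_3c$. Using $4+\delta=2(2+\delta/2)$ and dividing the first coordinate by $2+\delta/2\neq 0$, the first-coordinate condition becomes
\[
2\alpha_1=s_2b+s_3c+s_1a .
\]
This has an integer solution $\alpha_1$ exactly when $a+b+c$ is even, because the signs do not affect parity.

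Finally I would count. The triples in $\{0,1\}^3$ with $a+b+c$ even are $(0,0,0),(1,1,0),(1,0,1),(0,1,1)$, which is exactly four. Therefore exactly four corners of $R_0$, including $x$ itself, lie in $\Lambda$. There is no real obstacle in this argument. The only point needing care is that the signs $s_i$ and the choice of starting corner do not affect the parity condition, and the displayed equation makes this clear.
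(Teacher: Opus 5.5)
Your proof is correct: the parity condition $2\alpha_1=s_1a+s_2b+s_3c$ is derived correctly, and it singles out exactly four triples. The paper gives no argument for this Observation and only points to Figure~\ref{fig_clm_sure_corner}, so your computation fills that gap rather than competing with an existing proof.

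Your technique matches the one the paper uses in Claims~\ref{clm:not_lambda} and~\ref{clm:lambda}: write a displacement in the basis ${\bf v}_1,{\bf v}_2,{\bf v}_3$ and test whether the coefficients are integers. Your version gives more than the count, because it identifies the four lattice corners. They are the corners whose offset from $x$ has an even number of nonzero components.

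That is the information the proof of Theorem~\ref{cor-grid-alpha-3d} uses without stating it, when it says the four ball pieces in $P_0$ and the four in $P_1$ together make up a whole ball. Rerunning your computation for $f(y)-x$, with an extra $2+\delta/2$ in the first coordinate, shows that $f(y)\in\Lambda$ if and only if $a+b+c$ is odd. So the lattice corners of $R_1$ occupy exactly the four corner positions complementary to those of $R_0$. The eight inward-pointing octant pieces therefore have pairwise distinct orientations, and the figure alone does not make this clear.
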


 Before stating the next claim, we define a translation between the two boxes.
Let $f: R_0\rightarrow R_1$ be a mapping defined by $f(x)=x+(2+\delta/2){\bf e}_1$.

\begin{restatable}{clm}{clmNotLambda}\label{clm:not_lambda}
If a point $x\in R_0$ belongs to the lattice $\Lambda$, then its image $f(x)$ does not belong to $\Lambda$. Conversely, if a point $y\in R_1$ belongs to $\Lambda$, then its pre-image $f^{-1}(y)$ does not belong to $\Lambda$.
\end{restatable}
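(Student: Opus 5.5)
The claim follows from the fact that $(2+\delta/2){\bf e}_1$ is not a lattice vector, together with closure of $\Lambda$ under subtraction. Suppose, for contradiction, that $x\in R_0\cap\Lambda$ and $f(x)=x+(2+\delta/2){\bf e}_1\in\Lambda$. Since $\Lambda$ is an additive group, the difference
\[
f(x)-x=(2+\delta/2){\bf e}_1
\]
would be a lattice vector. The converse direction is identical: if $y\in R_1\cap\Lambda$ and $f^{-1}(y)\in\Lambda$, then $y-f^{-1}(y)=(2+\delta/2){\bf e}_1\in\Lambda$. So both statements reduce to showing $(2+\delta/2){\bf e}_1\notin\Lambda$.

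I will give two arguments for this.

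\textbf{Via the distance bound.} Lemma~\ref{obs-lattice-Lambda} says that distinct lattice points are at distance strictly greater than $4$. Since $0\in\Lambda$, every nonzero lattice vector has norm greater than $4$. For sufficiently small $\delta$ we have $2+\delta/2<4$, so $(2+\delta/2){\bf e}_1$ is too short to be a lattice vector.

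\textbf{Directly from the basis.} Write a general lattice vector as $\alpha_1{\bf v}_1+\alpha_2{\bf v}_2+\alpha_3{\bf v}_3$ with $\alpha_i\in\mathbb{Z}$. Its coordinates are
\[
\Bigl((4+\delta)\alpha_1-(2+\delta/2)(\alpha_2+\alpha_3),\ 2\sqrt{3}\,\alpha_2,\ 2\sqrt{3}\,\alpha_3\Bigr).
\]
For this to equal $(2+\delta/2){\bf e}_1$, the second and third coordinates force $\alpha_2=\alpha_3=0$. The first coordinate then becomes $(4+\delta)\alpha_1=2+\delta/2$, which gives $\alpha_1=1/2\notin\mathbb{Z}$, a contradiction. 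This argument needs no smallness assumption on $\delta$.

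The proof is elementary, and the one point that needs care is this non-membership check. The direct basis computation handles it cleanly. It also matches the periodicity remark earlier in the paper: two lattice points differing only in the $x_1$-coordinate differ by a multiple of $4+\delta$, and $2+\delta/2$ is not such a multiple.
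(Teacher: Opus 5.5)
Your proof is correct and is essentially the paper's argument. The paper writes $f(x)=(z_1+\tfrac12){\bf v}_1+z_2{\bf v}_2+z_3{\bf v}_3$ and concludes from $z_1+\tfrac12\notin\mathbb{Z}$, which is the same observation as your reduction to $(2+\delta/2){\bf e}_1=\tfrac12{\bf v}_1\notin\Lambda$. Your coordinate computation (forcing $\alpha_2=\alpha_3=0$ and $\alpha_1=\tfrac12$) makes explicit the uniqueness of basis coefficients that the paper uses tacitly, and your alternative via Lemma~\ref{obs-lattice-Lambda} is also valid since $\delta<4$.
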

\begin{proof}
  Let $x\in R_0\cap\Lambda$. Since $\Lambda$ is generated by ${\bf v}_1,{\bf v}_2,{\bf v}_3$, we can write
 $x=z_1{\bf v}_1+z_2{\bf v}_2+z_3{\bf v}_3$, for some $(z_1,z_2,z_3)\in\mathbb{Z}^3$.
    By the definition of $f$, we have
    $       f(x)=x+(2+\delta/2){\bf e}_1
     =\left(z_1+1/2\right){\bf v}_1+z_2{\bf v}_2+z_3{\bf v}_3$.
    Since $z_1\in\mathbb{Z}$, the scalar $\left(z_1+1/2\right)\notin\mathbb{Z}$. Hence, $f(x)\notin\Lambda$.

Now, we will show that if a point $y=f(x')\in R_1$ belongs to the lattice $\Lambda$, then the pre-image $f^{-1}(y)=x'$ does not belong to $\Lambda$. If $y\in R_1\cap\Lambda$, and $y=f(x')=x'+(2+\delta/2){\bf e}_1=z_1{\bf v}_1+z_2{\bf v}_2+z_3{\bf v}_3$, where $(z_1,z_2,z_3)\in\mathbb{Z}^3$.
    Note that when the point $y=f(x')=x'+(2+\delta/2){\bf e}_1\in{R}_1$ then the pre-image $f^{-1}(y)=f^{-1}(f(x'))=x'\in{R}_0$. We have
        $x'=y-{(2+\delta/2){\bf e}_1}
        =\left(z_1-1/2\right){\bf v}_1+z_2{\bf v}_2+z_3{\bf v}_3$.
    Since $z_1\in\mathbb{Z}$, the scalar $\left(z_1-1/2\right)\notin\mathbb{Z}$. Thus, we have $x'\notin\Lambda$. Hence, the claim follows.
\myqed
\end{proof}

 For each $i\in\{2,3\}$, let $h_i: R_1\rightarrow R_i$ be a mapping defined by
 $h_i(x)=x-(2+\delta/2){\bf e}_1+2\sqrt{3}{\bf e}_i$.

\begin{clm}\label{clm:lambda}
 For each $i\in\{2,3\}$, a point $x\in R_1$ belongs to the lattice $\Lambda$ if and only if its image $h_i(x)$ belongs to $\Lambda$.
\end{clm}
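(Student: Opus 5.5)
The claim follows from the fact that $h_i$ is translation by a lattice vector. I will compute the translation vector of $h_i$ and compare it with the generators of $\Lambda$. By definition, $h_i(x)=x+\mathbf{w}_i$ with
\[
\mathbf{w}_i=-(2+\delta/2)\mathbf{e}_1+2\sqrt{3}\mathbf{e}_i .
\]
Comparing with the definition of the lattice, $\mathbf{w}_i$ is exactly the generator $\mathbf{v}_i$ for $i\in\{2,3\}$. So $h_i$ is translation by the generator $\mathbf{v}_i\in\Lambda$.

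Forward direction: let $x\in R_1\cap\Lambda$. Write $x=z_1\mathbf{v}_1+z_2\mathbf{v}_2+z_3\mathbf{v}_3$ with $(z_1,z_2,z_3)\in\mathbb{Z}^3$, as in the proof of Claim~\ref{clm:not_lambda}. Then $h_i(x)=x+\mathbf{v}_i$ has the same coefficients, except that $z_i$ is replaced by $z_i+1$. These are still integers, so $h_i(x)\in\Lambda$.

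Converse: suppose $h_i(x)\in\Lambda$. Then $x=h_i(x)-\mathbf{v}_i$ has integer coefficients $z_j$ for $j\neq i$ and $z_i-1$, so $x\in\Lambda$. Equivalently, $\Lambda$ is closed under adding and subtracting its generators, since it is an additive subgroup of $\mathbb{R}^3$.

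I expect no real obstacle here. The one thing to check carefully is that the translation vector matches $\mathbf{v}_i$ exactly, including the sign of the $\mathbf{e}_1$ component and the factor $2\sqrt{3}$ in front of $\mathbf{e}_i$. The domain and codomain $R_1\to R_i$ play no role in lattice membership; they only record where the image box lies for later use.
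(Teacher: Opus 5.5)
Your proposal is correct and follows the paper's proof: both identify the translation vector of $h_i$ as the generator $\mathbf{v}_i$, and observe that adding it only changes the integer coefficient $z_i$ to $z_i+1$. Your explicit converse, subtracting $\mathbf{v}_i$ and using that $\Lambda$ is an additive subgroup, fills in the step the paper leaves as ``in a similar way.''
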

\begin{proof}
      We first prove the forward direction.
    Let $x\in R_1\cap\Lambda$. Since $x\in\Lambda$, we have $x=z_1{\bf v}_1+z_2{\bf v}_2+z_3{\bf v}_3$, where $(z_1,z_2,z_3)\in\mathbb{Z}^3$.
    By the definition of $h_i$, we have
    \begin{align*}
        h_i(x)=&x-(2+\delta/2){\bf e}_1+2\sqrt{3}{\bf e}_i=z_1{\bf v}_1+z_2{\bf v}_2+z_3{\bf v}_3+\left(-(2+\delta/2)\right){\bf e}_1+2\sqrt{3}{\bf e}_i\\
        =&z_1{\bf v}_1+z_2{\bf v}_2+z_3{\bf v}_3+{\bf v}_i\tag{Since ${\bf v}_i=\left(-(2+\delta/2)\right){\bf e}_1+2\sqrt{3}{\bf e}_i$}\\
        =&z_1{\bf v}_1+(z_2+1){\bf v}_2+z_3{\bf v}_3 \text{ or }z_1{\bf v}_1+z_2{\bf v}_2+(z_3+1) {\bf v}_3.
    \end{align*}
    Since $z_i\in\mathbb{Z}$, the scalar $z_i+1$ is also an integer point. Hence, $h_i(x)\in\Lambda$. In a similar way, one can prove the converse part.\myqed
\end{proof}


\begin{clm}\label{clm_R_same_R'}
    For each $i\in\{2,3\}$, the volume of $R\cap U$ is equal to the volume of $R_i'(=R_0\cup R_i)\cap U$.
\end{clm}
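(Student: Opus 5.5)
Since $R=R_0\cup R_1$ and $R_i'=R_0\cup R_i$, where $R_0,R_1,R_i$ meet only along faces of measure zero, the claim reduces to showing $\Vol(R_1\cap U)=\Vol(R_i\cap U)$ for each $i\in\{2,3\}$. Here $R_i=h_i(R_1)=R_1-(2+\delta/2){\bf e}_1+2\sqrt{3}{\bf e}_i=R_0+2\sqrt{3}{\bf e}_i$. The natural tool is that $h_i$ is a translation by the lattice vector ${\bf v}_i$. Indeed, $h_i(x)=x+{\bf v}_i$, and ${\bf v}_i\in\Lambda$.

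The key step is to show $h_i(U)=U$. Since $U=\bigcup_{p\in\Lambda}\sigma(p)$ and translation by ${\bf v}_i$ permutes $\Lambda$ bijectively (it sends $z_1{\bf v}_1+z_2{\bf v}_2+z_3{\bf v}_3$ to the point with $z_i$ replaced by $z_i+1$, exactly as computed in the proof of \Cref{clm:lambda}), we get $U+{\bf v}_i=\bigcup_{p\in\Lambda}\sigma(p+{\bf v}_i)=U$. From this, $h_i(R_1\cap U)=h_i(R_1)\cap h_i(U)=R_i\cap U$. Translations preserve Lebesgue measure, so $\Vol(R_1\cap U)=\Vol(R_i\cap U)$. \Cref{clm:lambda} is the pointwise, lattice-level version of this invariance. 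I would cite it to say that the centers of the balls meeting $R_1$ correspond exactly to the centers of the balls meeting $R_i$, and then upgrade it to the union of balls via the translation.

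It then remains to add the volume on $R_0$. We have $\Vol(R\cap U)=\Vol(R_0\cap U)+\Vol(R_1\cap U)$, because $R_0\cap R_1$ is a planar face of volume zero. Likewise $\Vol(R_i'\cap U)=\Vol(R_0\cap U)+\Vol(R_i\cap U)$, because $R_0\cap R_i$ is the face $x_i=l(x_i)+2\sqrt3$ of $R_0$ and also has volume zero. Combining these gives the claim.

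The only real obstacle is justifying that the set equality $h_i(U)=U$ holds for the whole union of balls and not just for the lattice points. A ball centred outside $R_1$ can still reach into $R_1$, so an argument that looks only at lattice points inside $R_1$, as \Cref{clm:lambda} is literally stated, is not enough. I would therefore argue globally with the translation invariance $\Lambda+{\bf v}_i=\Lambda$ rather than restrict to points inside $R_1$. With that, the claim is immediate, and no periodicity argument such as \Cref{obs:invar-3d} is needed.
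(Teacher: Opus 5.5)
Your proof is correct and follows the same route as the paper: translation by the lattice vector ${\bf v}_i$ (the map $h_i$ of Claim~\ref{clm:lambda}) carries $R_1\cap U$ onto $R_i\cap U$, and adding the common piece $R_0\cap U$ gives the claim. Your global formulation, $\Lambda+{\bf v}_i=\Lambda$ and hence $U+{\bf v}_i=U$, is somewhat more careful than the paper's argument. The paper cites Claim~\ref{clm:lambda}, which as stated only concerns lattice points inside $R_1$, whereas balls centred outside $R_1$ can also meet it.
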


\begin{proof}
    By~\Cref{clm:lambda}, a point $p\in{R_1}$ belongs to the lattice $\Lambda$ if and only if the point $p+\left(-(2+\delta/2){\bf e}_1+2\sqrt{3}{\bf e}_i\right)$ is a point in $\Lambda$ and lies in $R_i$. Consequently, the set of lattice points inducing balls intersecting $R_1$ is in one-to-one correspondence with the set of lattice points inducing balls intersecting $R_i$, and hence we have $\Vol(R_1\cap U)=\Vol( R_i\cap U)$. It follows that, for each $i\in\{2,3\}$,  we have 
$\Vol(R\cap U)=\Vol( (R_0\cup R_1)\cap U)=\Vol((R_0\cup R_i)\cap U)=\Vol(R_i'\cap U)$, as claimed.
\end{proof}

Using the claims and observations established above, we now prove~\Cref{cor-grid-alpha-3d}.\\


\begin{proof}[Proof of~\Cref{cor-grid-alpha-3d}.]
First, we show that if a corner point of $R$ coincides with a lattice point, then $\Vol(R\cap U)=\Vol(\sigma)$. We then prove that $\Vol(R\cap U)$ is invariant under any translation.  
Let $x$ be a corner of $R$ such that $x\in\Lambda$. Due to~\Cref{clm_sure_corner}, both $R_0$ and $R_1$ have corners that also coincide with $\Lambda$. Let $P_0=R_0\cap U$ and $P_1=R_1\cap U$.
Observe that three mutually orthogonal planes passing through a point $c$ partition $\IR^3$ into eight octants. Consequently, the unit ball $\sigma$ centered at $c$ can be decomposed into at most eight distinct parts. By~\Cref{clm:corner}, each of $P_0$ and $P_1$ contains four such disjoint parts of $\sigma$, respectively. Moreover, by~\Cref{clm:not_lambda}, the parts contained in $P_0$ and $P_1$ are distinct. Hence, combining the parts from $P_0$ and $P_1$ recovers the entire ball $\sigma$. Therefore, when a corner of $R$ lies in $\Lambda$, we have $\Vol(R\cap U)=\Vol(\sigma)$.

We now show that $\Vol( R\cap U)$ is invariant under translations. 
Let $\bf v\in\IR^3$ be an arbitrary translation vector, which we write as ${\bf v}=({\bf v}(x_1),{\bf v}(x_2),{\bf v}(x_3))={\bf v}(x_1){\bf e}_1+{\bf v}(x_2){\bf e}_2+{\bf v}(x_3){\bf e}_3$. 
Since any translation can be decomposed into successive translations along the coordinate axes, it suffices to prove invariance under translations of the form ${\bf v}(x_i){\bf e}_i $, for each $i\in[3]$. 
 We begin with translations along the $x_1$-axis. The side length of $R$ along the $x_1$-axis equals the period of the lattice $\Lambda$ in that direction. Therefore, by~\Cref{obs:invar-3d}, $\Vol( R\cap  U) =\Vol(( {R +{\bf v}(x_1){\bf e}_1})\cap U)$. 
Next, consider translations along the $x_i$-axis for $i\in\{2,3\}$. 
 Recall that for each $i\in\{2,3\}$, we have $R_i'=R_0\cup R_i$, where $R_i=R_1-(2+\delta/2){\bf e}_1+2\sqrt{3}{\bf e}_i$ is a translated copy of $R_1$. 
We then observe that 
$$ 
\Vol(R\cap U) =\Vol( R_i'\cap U) =\Vol( (R_i'+{\bf v}(x_i){\bf e}_i)\cap U) =  \Vol( ( R+{\bf v}(x_i){\bf e}_i)\cap U).$$
 Here, the first and third equalities follow from~\Cref{clm_R_same_R'}, while the second equality follows from~\Cref{obs:invar-3d}. 
Hence, we  conclude that $\Vol(R\cap U)$ is invariant under arbitrary
translations. This completes the proof. \myqed
\end{proof}


\begin{rmk}
    The lattice properties established above for $d=3$ admit analogues in $d \ge 4$.
\end{rmk}

\subsection{Randomized Algorithm and its Analysis}
Throughout this section, all balls are assumed to be unit balls. A unit ball centered at a point $c\in\IR^3$ is denoted by $\sigma(c)$.
We consider a randomized algorithm obtained by generalizing the framework of Caragiannis et al.~\cite{CaragiannisFKP07}. In the framework, randomization can be implemented either by randomly shifting the underlying lattice or, equivalently, by applying a fixed random shift to the centers of the input balls. Both viewpoints yield the same relative configuration between the lattice and the input objects. For simplicity of presentation and analysis, we adopt the latter viewpoint.

\smallskip\noindent
\textbf{\underline{Description of the Algorithm $\AF$:}}
At the beginning, the algorithm $\AF$ selects three scalars $\beta_1\in[0,4+\delta)$ and $\beta_2,\beta_3 \in[0,2\sqrt{3})$ independently and uniformly at random, and forms the vector ${\bf b}=(\beta_1,\beta_2,\beta_3)$, where $\delta>0$ is an arbitrarily small fixed constant. Upon the arrival of a ball $\sigma(c)$, the algorithm proceeds as follows. If there exists a lattice point $p\in\Lambda$ such that $\dist(p,c+{\bf b})\leq 1$, then it executes the algorithm $\AGDS$ on $\sigma(c)$; Otherwise, it ignores $\sigma(c)$.

\subparagraph*{\underline{Efficiently testing acceptance under random shifts.}}

Before presenting the analysis of the algorithm, it is interesting to note that one can verify, in constant time, whether there exists a lattice point $p\in\Lambda$ such that the distance between the shifted center $c+{\bf v}$ of an input ball $\sigma(c)$ and $p$ is at most 1.

\begin{restatable}{lemma}{lemVerify}\label{lem_verify}
Let $c'=c+\mathbf{b}\in\mathbb{R}^3$, where $c$ is the center of a unit ball $\sigma(c)$ and $\mathbf{b}=(\beta_1,\beta_2,\beta_3)$ with $\beta_1\in[0,4+\delta)$ and $\beta_2,\beta_3\in[0,2\sqrt{3})$. 
Then there exists a lattice point $p\in\Lambda$ that is closest to $c'$, and such a point $p$ can be computed in $O(1)$ time. Consequently, deciding whether $\dist(p,c')\le 1$ can be done in $O(1)$ time.
\end{restatable}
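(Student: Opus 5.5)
The plan is to exploit the layered structure of $\Lambda$: it is a union of translated planar lattices stacked along the $x_3$-axis (and symmetrically the $x_2$-axis), so a nearest lattice point can be found by examining only a constant number of candidates near $c'$. Write $c'=(c'_1,c'_2,c'_3)$. A point of $\Lambda$ has the form $z_1{\bf v}_1+z_2{\bf v}_2+z_3{\bf v}_3=\big((4+\delta)z_1-(2+\delta/2)(z_2+z_3),\,2\sqrt3 z_2,\,2\sqrt3 z_3\big)$. Hence its $x_2$- and $x_3$-coordinates lie on the grid $2\sqrt3\mathbb{Z}$. Once $z_2,z_3$ are fixed, its $x_1$-coordinate ranges over the one-dimensional lattice $(4+\delta)\mathbb{Z}-(2+\delta/2)(z_2+z_3)$. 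So the first step is to observe that, for fixed $(z_2,z_3)$, the lattice point closest to $c'$ on that line is obtained by rounding: $z_1=\mathrm{round}\big((c'_1+(2+\delta/2)(z_2+z_3))/(4+\delta)\big)$. This takes $O(1)$ arithmetic operations.

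The second step restricts $(z_2,z_3)$ to a constant-size window. Let $z_i^0=\lfloor c'_i/(2\sqrt3)\rfloor$ for $i\in\{2,3\}$. For any fixed $(z_2,z_3)$, the nearest point on the corresponding line lies within $x_1$-distance $(4+\delta)/2$ of $c'_1$. With $(z_2,z_3)=(z_2^0,z_3^0)$ this gives an explicit bound $D$ on the distance to the nearest lattice point, namely $D^2\le ((4+\delta)/2)^2+2(2\sqrt3)^2$. Any lattice point at distance at most $D$ must satisfy $|2\sqrt3 z_i-c'_i|\le D$. Therefore only $z_i\in\{z_i^0-k,\dots,z_i^0+k\}$ with $k=\lceil D/(2\sqrt3)\rceil+1$, a constant, need be considered. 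Enumerating these $(2k+1)^2$ pairs, computing the rounded $z_1$ for each, and taking the minimum-distance candidate yields a closest lattice point $p$ in $O(1)$ time, assuming unit-cost real arithmetic with floor.

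Existence of a closest point follows from the same argument, since only finitely many lattice points lie within distance $D$ of $c'$. It also follows from discreteness via \Cref{obs-lattice-Lambda}. The bounded range of ${\bf b}$ plays no real role here; it only ensures $c'$ is a well-defined finite point. The final step is immediate. A lattice point within distance $1$ of $c'$ exists if and only if $\dist(p,c')\le1$ for the closest $p$, and this is one distance computation. By \Cref{obs-lattice-Lambda} there is at most one such point, since two would be within distance $2<4$ of each other.

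The main point needing care is the correctness of the window: the nearest lattice point must not be missed because it lies in a far-away $(z_2,z_3)$ slice. I would settle this with the explicit bound $D$ above rather than with any finer Voronoi analysis. A crude constant suffices here, because only $O(1)$ time is claimed.
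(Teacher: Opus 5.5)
Your proof is correct, but it follows a different route from the paper's. The paper builds a single candidate coordinate by coordinate. It takes $p(x_2),p(x_3)$ to be the nearest multiples of $2\sqrt3$ to $c'(x_2),c'(x_3)$, which fixes the parity of $a_2+a_3$. It then takes $p(x_1)$ to be the nearest multiple of $2+\delta/2$ with that parity, and argues optimality by coordinate-wise domination. You instead minimize exactly within each slice $(z_2,z_3)$ by rounding, bound the distance to $\Lambda$ by the explicit constant $D$, and enumerate the constant-size window of slices that can contain a point within distance $D$.

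The paper's route checks only one candidate. However, its domination step for $x_1$ only compares $p$ with lattice points whose $a_2+a_3$ has the same parity, and a point in a neighbouring slice can be closer. For $c'=(2,1.7,0)$, the paper's procedure returns the origin, at distance $\sqrt{6.89}\approx 2.62$. Yet $\mathbf{v}_1+\mathbf{v}_2=(2+\delta/2,2\sqrt3,0)\in\Lambda$ is at distance about $1.76$. So the closest-point part of the statement is established by your window enumeration, not by the paper's argument.

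For the acceptance test used by the algorithm, the paper's single candidate is still enough. Suppose some $q\in\Lambda$ has $\dist(q,c')\le 1$. For $i\in\{2,3\}$, both $q(x_i)$ and $p(x_i)$ are multiples of $2\sqrt3$ within distance $1$ of $c'(x_i)$, so they are equal. Hence $q(x_1)$ and $p(x_1)$ lie in the same translate of $(4+\delta)\mathbb{Z}$ and are both within distance $1$ of $c'(x_1)$, so they are also equal, and $q=p$.

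This is the same uniqueness phenomenon you derive from \Cref{obs-lattice-Lambda}. If one only wants the test $\dist(p,c')\le 1$, your enumeration could be replaced by a single rounded candidate. Your version is the one that certifies a genuinely closest lattice point, as the lemma claims.
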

\begin{proof}

Recall that $$\Lambda=\{\alpha_1 {\bf v}_1+\alpha_2 {\bf v}_2+\alpha_3 {\bf v}_3\ |\ (\alpha_1,\alpha_2,\alpha_3)\in \mathbb{Z}^3\},$$ where ${\bf v}_1=(4+\delta){\bf e}_1$, ${\bf v}_2 =-\left(2+\delta/2\right){\bf e}_1+2\sqrt{3}{\bf e}_2$,  and ${\bf v}_3 =-\left(2+\delta/2\right){\bf e}_1+2\sqrt{3}{\bf e}_3$.
Thus, any lattice point $y\in\Lambda$ can be written explicitly as $$y=\left((4+\delta)a_1-\left(2+\delta/2\right)a_2-\left(2+\delta/2\right)a_3\right){\bf e}_1+2\sqrt{3}a_2{\bf e}_2+2\sqrt{3}a_3{\bf e}_3,$$ for integers $a_1,a_2,a_3$.
Given a unit ball $\sigma(c)$, the algorithm $\AF$ forms the shifted center $c'=c+\mathbf{b}$, where $\mathbf{b}=(\beta_1,\beta_2,\beta_3)$ with $\beta_1\in[0,4+\delta)$ and $\beta_2,\beta_3\in[0,2\sqrt{3})$.
For each $i\in[3]$, write $$c'(x_i)=z_i\sqrt{3}+y_i,$$ where $z_i\in\mathbb{Z}$ and $y_i\in\left[0,\sqrt{3}\right)$.

\medskip
\noindent
\textit{\underline{Determining the second and third coordinates of the closest lattice point from $c'$.}}\\
For each $i\in\{2,3\}$, the $i$-th coordinate of any lattice point in $\Lambda$ is an integer multiple of $2\sqrt{3}$. We therefore define
$$
p(x_i)=
\begin{cases}
    z_i\sqrt{3},& \text{if $z_i$ is even}\\
   (z_i+1)\sqrt{3}, & \text{if $z_i$ is odd}.
\end{cases}$$
By construction, $p(x_i)$ is the closest admissible lattice coordinate to $c'(x_i)$, for $i\in\{2,3\}$.

Observe that once $p(x_i)$ is fixed for $i=2,3$, the parity of the corresponding lattice coefficient $a_i$ is uniquely determined, since $p(x_i)=2\sqrt{3}\,a_i$ and the algorithm restricts $a_i$ to be even or odd according to the parity of $z_i$. Hence the value $k \equiv a_2 + a_3 \pmod{2}$ is known before choosing $p(x_1)$.

\medskip
\noindent
\textit{\underline{Determining the first coordinate of the closest lattice point from $c'$.}}\\
Observe that the first coordinate of a lattice point $p=a_1{\bf v}_1+a_2{\bf v}_2+a_3{\bf v}_3$ equals $(2+\delta/2)(2a_1-a_2-a_3)$. Since $2a_1$ is always even, the parity of the integer
multiplier of $(2+\delta/2)$ is completely determined by $a_2$ and $a_3$. 
Let $$k := |\{i\in\{2,3\} : a_i \text{ is odd}\}|.$$
Write $c'(x_1)=\left(2+\delta/2\right)z_1+y_1\in\IR$, where $z_1\in\mathbb{Z}$ and $y_1\in\left[0,\left(2+\delta/2\right)\right)$.
We define 
$$
p(x_1)=
\begin{cases}
(2+\delta/2)z_1, & \text{if $z_1 \equiv k \pmod{2}$},\\
(2+\delta/2)(z_1+1), & \text{otherwise}.
\end{cases}
$$
By construction, the resulting point $p$ belongs to $\Lambda$, and it can be computed in $O(1)$ time.
We will prove that $p$ is the closest lattice point to $c'$.
\begin{clm}\label{clm_closest}
    The point $p$ is the closest point of $\Lambda$ to $c'$.
\end{clm}
\begin{proof}
    Let $p'\neq p$ be any other point in $\Lambda$. Then, we have
   $$
        p'(x_1)=(2+\delta/2)(2a_1-a_2-a_3),$$
        $$
p'(x_2)=2\sqrt{3}a_2,$$
$$
p'(x_3)=2\sqrt{3}a_3,
    $$
for integers $a_1,a_2,a_3$.

For each $i\in\{2,3\}$, by construction, $p(x_i)$ is the closest admissible multiple of $2\sqrt{3}$ to $c'(x_i)$, and hence, we have
$$
|p'(x_i)-c'(x_i)| \ge |p(x_i)-c'(x_i)|.
$$

For the first coordinate, the admissible values of $p'(x_1)$ are exactly those integer multiples of $(2+\delta/2)$ whose parity is determined by $k$.
By definition, $p(x_1)$ is the closest such admissible value to $c'(x_1)$, and therefore, we have
$$
|p'(x_1)-c'(x_1)| \ge |p(x_1)-c'(x_1)|.
$$

Combining the inequalities for all three coordinates and using the monotonicity of the Euclidean norm with respect to coordinate-wise absolute differences, we obtain
$$
\dist(p',c') \ge \dist(p,c').
$$
Thus, $p$ is the closest point of $\Lambda$ to $c'$.
Hence, the claim follows.
\myqed
\end{proof} 
Hence, the closest lattice point to $c'$ can be computed in $O(1)$ time, and whether $\dist(p,c')\leq 1$ can also be performed in $O(1)$ time, as required.      
\end{proof}

\smallskip\noindent
\textbf{\underline{Analysis of the Algorithm $\AF$:}}
We now analyze the performance of the algorithm $\AF$, thereby completing the proof of~\Cref{thm:rand-alpha}.
 Let $\cal I$ denote the sequence of unit balls presented to $\AF$, and let ${\cal I}'\subseteq {\cal I}$ be the sub-sequence consisting of those balls on which $\AF$ executes $\AGDS$. Let $\OO$ be an MIS for $\cal I$, and $\A$ be an IS returned by $\AF$. 
\begin{lemma}\label{clm:alpha_prob}
    The algorithm $\AGDS$ is executed on a unit ball $\sigma\in{\cal I}$ with  probability $\left(\frac{(4/3)\pi}{48+12\delta}\right)$, where $\delta>0$ is an arbitrary small constant.
\end{lemma}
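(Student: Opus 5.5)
The plan is to reduce the acceptance probability to a volume ratio via the uniform random shift, and then apply \Cref{cor-grid-alpha-3d}. Fix an input ball $\sigma(c)\in{\cal I}$. The algorithm $\AF$ runs $\AGDS$ on $\sigma(c)$ exactly when $c+{\bf b}\in U$, where $U=\cup_{p\in\Lambda}\sigma(p)$ is the union of unit balls centred at the lattice points. Because ${\bf b}$ is uniform on the box $B=[0,4+\delta)\times[0,2\sqrt{3})^2$, the shifted point $c+{\bf b}$ is uniformly distributed on the translated box $R=c+B$. This box has side length $4+\delta$ along $x_1$ and $2\sqrt{3}$ along $x_2,x_3$. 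Up to boundary, which has measure zero, it is exactly a box of the form in \Cref{cor-grid-alpha-3d} with $l=c$. Hence
$$\Pr[\AGDS \text{ runs on } \sigma(c)]=\frac{\Vol(R\cap U)}{\Vol(R)}.$$

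\textbf{Numerator.} By \Cref{cor-grid-alpha-3d}, $\Vol(R\cap U)=\Vol(\sigma)=\tfrac{4}{3}\pi$, and this holds for every $l$, hence for every input center $c$. Two small points need checking:
\begin{itemize}
\item Half-open versus closed boxes change nothing, since faces have volume zero.
\item The membership test in $\AF$, ``$\exists p\in\Lambda$ with $\dist(p,c+{\bf b})\le1$'', is exactly the event $c+{\bf b}\in U$. By \Cref{lem_verify} this test is well defined via the closest lattice point. By \Cref{obs-lattice-Lambda} the balls $\sigma(p)$ are pairwise disjoint, so $U$ is a disjoint union; this is not needed for the computation itself.
\end{itemize}

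\textbf{Denominator.} $\Vol(R)=(4+\delta)(2\sqrt{3})^2=(4+\delta)\cdot 12=48+12\delta$. The ratio is therefore $\frac{(4/3)\pi}{48+12\delta}$, as claimed.

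The only substantive step is the invariance of $\Vol(R\cap U)$ under the position of $R$, and that is exactly what \Cref{cor-grid-alpha-3d} provides. What remains is to verify carefully that the distribution of $c+{\bf b}$ is uniform on a box of precisely the dimensions required by that theorem.
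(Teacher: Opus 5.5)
Your proposal is correct and follows essentially the same route as the paper: the uniform shift makes $c+\mathbf{b}$ uniform on the box $R$ anchored at $c$, \Cref{cor-grid-alpha-3d} gives $\Vol(R\cap U)=\Vol(\sigma)$, and dividing by $\Vol(R)=48+12\delta$ yields the stated probability. Your remarks that half-open versus closed boxes make no difference and that the acceptance test is exactly the event $c+\mathbf{b}\in U$ are harmless refinements that the paper leaves implicit.
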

\begin{proof}
 Recall that $\AF$ executes $\AGDS$ on a unit ball $\sigma(c)$, if there exists a lattice point $p\in\Lambda$ such that $\dist(p,c')\leq 1$, where $c'=c+{\bf b}$. 
Let $ U=\cup_{p\in\Lambda} {\sigma}(p)$ denote the union of unit balls centered at the lattice points.
Define a 3-dimensional axis-aligned box 
$$R=\left[c(x_1),c(x_1)+4+\delta\right]\times\prod_{j=2}^3\left[c(x_j),c(x_j)+2\sqrt{3}\right].$$ 
By~\Cref{cor-grid-alpha-3d}, we have $\Vol(R\cap  U)=\Vol(\sigma)$.
 Since ${\bf b}$ is chosen uniformly at random, the shifted center $c'=c+{\bf b}$ is
uniformly distributed in $R$. The algorithm $\AF$ executes $\AGDS$ on $\sigma$ exactly
when $c'\in U$. Therefore, the probability that $\AGDS$ is executed on $\sigma\in{\cal I}$ is equal to the ratio $\Vol(\sigma)/ \Vol(R)$, which is~$\left(\frac{(4/3)\pi}{48+12\delta}\right)$. 
 Hence, the lemma follows.
   \end{proof}

Let $\OO\cap{\cal I}'$ be the set consisting of unit balls of $\OO$ on which $\AGDS$ is executed, and let $\OO'$ be an MIS for ${\cal I}'$. Since $\OO\cap{\cal I}'$ is an independent set in ${\cal I}'$, we have $|\OO'|\geq |\OO\cap{\cal I}'|$. By the linearity of expectation, we have 
$\mathbb{E}[|\OO'|]\geq\mathbb{E}[|\OO\cap{\cal I}'|]$.
For $\sigma\in\OO$, define a random variable $X_{\sigma}$ that equals $1$ if
$\AGDS$ is executed on $\sigma$ and $0$ otherwise. Then, we have
$$\mathbb{E}[|\OO\cap{\cal I}'|]=\sum_{\sigma\in\OO} Pr[ \AGDS \text{ is executed on } \sigma]\cdot X_{\sigma}
   \geq\left(\frac{(4/3)\pi}{48+12\delta}\right)|\OO|,$$
where the last inequality follows from~\Cref{clm:alpha_prob}.
Since $\mathbb{E}[|\OO\cap{\cal I}'|]\geq\left(\frac{(4/3)\pi}{48+12\delta}\right)|\OO|$, we have
    $\mathbb{E}[|\OO'|]\geq\left(\frac{(4/3)\pi}{48+12\delta}\right)|\OO|$.

\begin{lemma}\label{conn-comp-alpha}
    Each connected component of the unit ball graph of ${\cal I}'$ is a clique.
\end{lemma}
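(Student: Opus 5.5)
The plan is to assign every ball of ${\cal I}'$ to a lattice point and then prove two facts: balls assigned to the same lattice point pairwise intersect, and balls assigned to different lattice points never intersect. Once both facts are in hand, the unit ball graph of ${\cal I}'$ is a disjoint union of cliques, one per lattice point used, with no edges between them. Each connected component is therefore exactly one of these cliques.

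The assignment comes from the description of $\AF$. A ball $\sigma(c)$ lies in ${\cal I}'$ exactly when some $p\in\Lambda$ satisfies $\dist(p,c+{\bf b})\le 1$. This $p$ is unique: if two lattice points $p\neq q$ both qualified, the triangle inequality would give $\dist(p,q)\le 2$, which contradicts \Cref{obs-lattice-Lambda} (distinct lattice points are at distance more than $4$). So each $\sigma(c)\in{\cal I}'$ has a well-defined lattice point $p(c)$. Since translating every center by the same vector ${\bf b}$ leaves the intersection graph unchanged, I will work with the shifted centers $c'=c+{\bf b}$ throughout.

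Both facts follow from the triangle inequality. Two unit balls intersect if and only if their centers are at distance at most $2$.
\begin{itemize}
\item For the same lattice point: if $p(c_1)=p(c_2)=p$, then $\dist(c_1',c_2')\le \dist(c_1',p)+\dist(p,c_2')\le 2$, so $\sigma(c_1)$ and $\sigma(c_2)$ intersect.
\item For different lattice points: if $p(c_1)=p_1\neq p_2=p(c_2)$, then $\dist(c_1',c_2')\ge \dist(p_1,p_2)-\dist(c_1',p_1)-\dist(c_2',p_2)>4-2=2$, again by \Cref{obs-lattice-Lambda}. Hence the two balls are disjoint.
\end{itemize}

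I do not expect a real obstacle. The only point needing care is the strictness of the separation bound: the strict inequality ``$>4$'' in \Cref{obs-lattice-Lambda}, which is where $\delta>0$ matters, is needed so that balls of different groups cannot even touch. This matters because intersection here is defined using closed balls.
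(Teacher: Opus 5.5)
Your proof is correct and follows essentially the same route as the paper: both combine \Cref{obs-lattice-Lambda} with the triangle inequality to show that two balls of ${\cal I}'$ intersect exactly when their shifted centers lie in the same $\sigma(p)$. Your explicit check that the assigned lattice point $p(c)$ is unique is a small point the paper leaves implicit. It is what makes the grouping of balls by lattice point well defined, so the graph is genuinely a disjoint union of cliques.
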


\begin{proof}
By construction, $\AF$ executes $\AGDS$ on a ball $\sigma(c)$ if and only if the shifted
center $c'=c+{\bf v}$ lies in $\sigma(p)$ for some $p\in\Lambda$. Thus, ${\cal I}'$ consists precisely of those balls whose shifted centers lie inside some $\sigma(p)$, where$p\in\Lambda$.
We show that two balls $\sigma(c_1)$ and $\sigma(c_2)$ in ${\cal I}'$ intersect if and only if their shifted centers $c_1'=c_1+{\bf b}$ and $c_2'=c_2+{\bf b}$ lie in the same $\sigma(p)$ for some $p\in\Lambda$.

For the forward direction, we prove the contrapositive statement:  ``If $c_1'$ and $c_2'$ lies in $\sigma(p_1)$ and $\sigma(p_2)$, where $p_1,p_2(\neq p_1)\in\Lambda$, then objects $\sigma(c_1)$ and $\sigma(c_2)$ do not intersect''. 
Due to~\Cref{obs-lattice-Lambda}, we have $\dist(p_1,p_2)>4$. By the triangle inequality, we have $\dist(p_1,c_1')+\dist(c_1',c_2')+\dist(c_2',p_2)> 4$. Note that $\dist(p_1,c_1')\leq 1$ and $\dist(c_2',p_2)=\dist(p_2,c_2')\leq 1$. Thus, we have $\dist(c_1',c_2')>2$, implying $\sigma(c_1')\cap\sigma(c_2')=\emptyset$. Since both centers are shifted by the same translation vector, we also have $\sigma(c_1)\cap\sigma(c_2)=\emptyset$. This completes the forward part.

Conversely, for some $p\in\Lambda$ if $c_1'$ and $c_2'$ lie in a same $\sigma(p)$, then $\dist(p,c_1')=\dist(c_1',p)\leq 1$  and $\dist(p,c_2')=\dist(c_2',p)\leq 1$, implying $\sigma(c_1')\cap\sigma(c_2')\neq \emptyset$. Since ${\bf b}$ is added to both $c_1$ and $c_2$ to obtain $c_1'$ and $c_2'$, we have $\sigma_1(c_1)\cap\sigma_2(c_2)\neq\emptyset$.
Hence, the claim follows. \myqed
\end{proof}

Since each connected component of the unit ball graph in ${\cal I}'$ is a clique,  any independent set in $\I'$ contains at most one unit ball from each connected component. The algorithm $\AF$, however, selects exactly one unit ball from each connected component. Thus, $|\A|\geq|\OO'|$, and by linearity of expectation, we have $\mathbb{E}[|\A|]\geq\mathbb{E}[|\OO'|]$.
Combining this with the bound on $\mathbb{E}[|\mathcal{O}'|]$ yields 
$\mathbb{E}[|\A|]\geq\left(\frac{(4/3)\pi}{48+12\delta}\right)|\OO|$. This completes the proof of~\Cref{thm:rand-alpha}.


\subsection*{Extension of algorithm $\AF$ for unit balls in $\IR^d$, for $d\geq 4$}\label{sect_extension}
The lattice-based randomized framework underlying $\AF$ can also be formulated for unit
balls in $\IR^d$ with $d\ge 4$. The algorithmic construction and the geometric arguments
used in the three-dimensional case can be generalized in an analogous manner in higher
dimensions, leading to a corresponding algorithm in $\IR^d$. 

However, the performance of the generalized algorithm deteriorates quickly as the
dimension increases. The main reason is geometric: the volume of the unit ball in $\IR^d$
decreases rapidly with $d$, while the volume of the corresponding fundamental box grows by
a factor of $2\sqrt{3}$ in each additional dimension. As a consequence, the probability
that a randomly shifted ball center lies within distance $1$ of a lattice point decreases
sharply with $d$, resulting in larger competitive ratios.

The structural properties proven for the three-dimensional case (see~\Cref{sec_lattice}) extend analogously to higher dimensions. Using these properties, the competitive ratio of $\AF$ against an oblivious adversary in $\IR^d$ is at most $((4+\delta)(2\sqrt{3})^{d-1})/\Vol(\text{unit ball in }\IR^d)$.
For $d=4$, this bound evaluates to $\frac{192\sqrt{3}}{\pi^2}+\varepsilon \approx 33.8,
\quad\text{where}\quad
\varepsilon=\frac{48\sqrt{3}\delta}{\pi^2}$.
Since the deterministic algorithm $\AGDS$ achieves a competitive ratio of at most $24$
for unit balls in $\IR^4$~\cite{Musin03}, the generalized algorithm $\AF$ does not surpass
the known deterministic lower bound in this dimension. This gap only widens for larger
values of $d$, highlighting a limitation of the current approach.

\section{Randomization with Geometric Representation: For Similarly Sized Fat Objects in $\IR^d$}\label{sec_fat}
In this section, we prove~\Cref{thm:ACY} which is for the online MIS problem on geometric
intersection graphs of $\alpha$-fat objects in $\IR^d$ whose widths lie in the range $[1, M]$, where  $M>2$. In addition to the standard online model, the algorithm assumes that the width of each object is revealed upon arrival and that the value of $M$ is known in advance.

\smallskip\noindent
\textbf{\underline{Description of the Algorithm $\ACY$:}}
The algorithm partitions the input objects into $\lfloor \log M \rfloor + 1$ disjoint
classes according to their widths. For each
$j\in\{0,1,\ldots,\lfloor \log M \rfloor\}$, let $S_j$ denote the set of objects whose
widths lie in the interval $[2^j,2^{j+1})$. 
When the first object arrives, the algorithm selects an index $j$ uniformly at random from
$\{0,1,\ldots,\lfloor \log M\rfloor\}$ and executes $\AGDS$ only on objects belonging to
the class $S_j$, rejecting all other objects.

\smallskip\noindent
\textbf{\underline{Analysis of the Algorithm $\ACY$:}}
We now analyze the performance of the algorithm $\ACY$. The following proof establishes the competitive ratio claimed in~\Cref{thm:ACY}.\\

\begin{proof}[Proof of~\Cref{thm:ACY}.]
Let $\OO$ be an MIS for an input sequence $\cal I$. For $j\in\{0,1,2,\ldots$, $\lfloor \log M\rfloor\}$, let ${S}_{j}$ be the collection of all $\alpha$-fat objects in $\cal I$ having widths in  the range $\left[2^j, 2^{j+1}\right)$. Let $\OO_j$ be an MIS for the objects belonging to the set $S_j$. Since $\OO\cap S_{j}$ is an independent set for the objects in ${S}_j$, we have $|\OO_j|\geq |\OO\cap {S}_j|$. 
 
Let $\A$ be the set of objects accepted by the algorithm $\ACY$, and let $\A_j$ be the set of objects accepted by the algorithm $\ACY$, assuming that the set $S_j$ is selected and the algorithm $\AGDS$ is executed on the objects of set $S_j$.
 Due to~\Cref{thm:ind}, $\ACY$ accepts at least $|\A_j|\geq \frac{1}{\zeta_{j}}|\OO_j|$ objects of $S_j$, where $\zeta_{j}$ is the independent kissing number of $\alpha$-fat objects having widths in  the range $[2^j,2^{j+1})$. 
Observe that, for each $j$, scaling down the objects in $S_j$ by a factor of $\frac{1}{2^j}$
to obtain objects in $S_0$ does not change the geometric intersection graph.
  Thus, for each $j$, the value of $\zeta_{j}$ is the same as $\zeta'$. 
 Since $|\OO_j|\geq |\OO\cap S_j|$, we have $|\A_j|\geq  \frac{1}{\zeta'}|\OO\cap S_j|$. 
Note that the set $S_j$ is selected with  probability $\frac{1}{\lfloor\log M\rfloor+1}$, where $j\in\{0,1,2,\ldots,\lfloor \log M\rfloor\}$.  The expected size of $\A$ is 
\begin{align*}
 \mathbb{E}[|\A|]=&\sum_{j=0}^{\lfloor \log M\rfloor} (Pr[S_j\text{ is selected}])\times |\A_j|\geq 
 \{Pr[S_j\text{ is selected}]\} \times \sum_{j=0}^{\lfloor \log M\rfloor}  \frac{1}{\zeta'}|\OO\cap S_j|\\
 \geq&\{Pr[S_j\text{ is selected}]\}\times \frac{1}{\zeta'} |\OO| = \frac{1}{\zeta' (\lfloor\log_2M\rfloor+1)}|\OO|.   
\end{align*}
Hence, the theorem follows. \myqed \end{proof}


{Note that our algorithms also apply to the special case $M=1$, corresponding to unit-size objects. However, in this regime, our performance guarantees do not surpass the known randomized lower bounds.}

\section{Randomization with Geometric Representation: For Similarly Sized Hyper-rectangles in $\IR^d$}\label{sec_rect}
In this section, we prove~\Cref{thm:RCY} which is for the geometric intersection graphs of axis-aligned hyper-rectangles in $\IR^d$ whose side lengths lie in the range $[1,M]$, where $M>2$.

\smallskip\noindent
\textbf{\underline{Description of the Algorithm $\RCY$:}} 
Hyper-rectangles in $\IR^d$ having side lengths in $[1,M]$ can be classified into $(\lfloor \log M\rfloor+1)^d$ disjoint classes according to their side lengths as follows.
For $i_1,i_2,\ldots,i_d\in[\lfloor \log M\rfloor]\cup\{0\}$, let $S^{(i_1,\ldots,i_d)}$ be the class containing hyper-rectangles $\sigma=\prod_{j=1}^d[l(x_j),u(x_j)]$ such that for each coordinate $j\in[d]$, we have $|u(x_j)-l(x_j)|\in[2^{i_j},2^{i_j+1})$. Upon the arrival of the first hyper-rectangle, the algorithm selects indices $i_1,i_2,\ldots,i_d$ uniformly at random from the set $\{0,1,2,\ldots,\lfloor \log M\rfloor\}$ and executes $\AGDS$ only on hyper-rectangles belonging to the selected class $S^{(i_1,\ldots,i_d)}$, rejecting all other hyper-rectangles.

\smallskip\noindent
\textbf{\underline{Analysis of the Algorithm $\RCY$:}}
A key ingredient distinguishing the hyper-rectangle setting from the general $\alpha$-fat case is the ability to bound the independent kissing number of each class by a constant that depends only on the dimension $d$ and is independent of $M$. The following lemma establishes this property.
\begin{lemma}\label{IKN_rectangle}
    The independent kissing number $\zeta^{(i_1,\ldots,i_d)}$ for the geometric intersection graph induced by axis-aligned hyper-rectangles in the class $S^{(i_1,\ldots,i_d)}$ is at most $4^d$, where $i_1,\ldots,i_d \in \{0,1,\ldots,\lfloor \log M\rfloor\}$.
\end{lemma}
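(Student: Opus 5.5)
Fix a hyper-rectangle $\sigma=\prod_{j=1}^d[l(x_j),u(x_j)]$ in the class $S^{(i_1,\ldots,i_d)}$, and let $\sigma_1,\ldots,\sigma_m$ be pairwise non-touching hyper-rectangles of the same class, each intersecting $\sigma$. I will show $m\le 4^d$ with a packing argument carried out independently in each coordinate. The tool is a product structure: each $\sigma_k$ is a product of intervals $I_k^{(j)}$. Two boxes are disjoint exactly when their projections are disjoint in at least one coordinate. Intersecting $\sigma$ means that for every $j$ the interval $I_k^{(j)}$ meets $[l(x_j),u(x_j)]$.

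\emph{Step 1 (one-dimensional anchoring).} Fix a coordinate $j$ and write $L=2^{i_j}$. Both $\sigma$ and each $\sigma_k$ have $j$-th side length in $[L,2L)$. I partition the line into consecutive half-open cells of length $L$, aligned so that one cell starts at $l(x_j)$. Since $|u(x_j)-l(x_j)|<2L$, the interval $[l(x_j),u(x_j)]$ lies inside at most two cells, namely $[l(x_j),l(x_j)+L)$ and $[l(x_j)+L,l(x_j)+2L)$. For each $\sigma_k$ I pick a point of $I_k^{(j)}\cap[l(x_j),u(x_j)]$. Because $I_k^{(j)}$ has length at least $L$, it contains at least one full cell-boundary point or a full cell near that point. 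To assign a canonical cell, I let $c_k^{(j)}$ be the cell containing the left endpoint of $I_k^{(j)}$. That endpoint lies in $[l(x_j)-2L,u(x_j)]\subseteq[l(x_j)-2L,l(x_j)+2L)$, so it falls into one of exactly $4$ cells. This gives at most $4$ choices per coordinate.

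\emph{Step 2 (injectivity).} Map each $\sigma_k$ to its vector of cell indices $(c_k^{(1)},\ldots,c_k^{(d)})$, which takes at most $4^d$ values. Suppose two boxes $\sigma_k,\sigma_{k'}$ get the same vector. Then in every coordinate $j$ their left endpoints lie in a common cell of length $L$, so they differ by less than $L$. Since each side has length at least $L$, the intervals $I_k^{(j)}$ and $I_{k'}^{(j)}$ overlap in every coordinate. Hence $\sigma_k\cap\sigma_{k'}\neq\emptyset$, which contradicts independence. So the map is injective, $m\le 4^d$, and $\zeta^{(i_1,\ldots,i_d)}\le 4^d$.

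\emph{Main obstacle.} The delicate step is Step 1: counting the range of possible left endpoints exactly, including the closed/half-open boundary cases. A left endpoint can be as low as $l(x_j)-2L$ (exclusive, since sides are $<2L$ and must reach $l(x_j)$) and as high as $u(x_j)<l(x_j)+2L$. This gives an interval of length just under $4L$, which meets at most $4$ aligned cells provided the alignment is chosen at $l(x_j)$ and the strict inequalities are handled carefully. If the boundary cases cause a fifth cell to appear, I would fall back to shifting the grid alignment to $l(x_j)-2L$ exactly and using the strict bound $|u(x_j)-l(x_j)|<2L$.
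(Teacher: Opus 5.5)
Your proof is correct, but it takes a different route from the paper.

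The paper uses a volume packing argument. It replaces each neighbor by a smallest box $S$ of the class, with side lengths exactly $2^{i_j}$. Every such copy meeting the center box lies inside an enlarged box $L'$ with sides less than $2^{i_j+2}$, and the bound then follows from $\Vol(L')/\Vol(S)\le 4^d$.

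You instead apply pigeonhole to lower corners. With $L=2^{i_j}$ and half-open cells anchored at $l(x_j)$, each neighbor's left endpoint in coordinate $j$ lies in $(l(x_j)-2L,\,u(x_j)]\subseteq(l(x_j)-2L,\,l(x_j)+2L)$. Suppose two boxes share a cell in every coordinate. Then in each coordinate their left endpoints differ by less than $L$, so the later one lies in the earlier interval, and the boxes intersect.

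Your route uses the product structure of boxes directly and handles boundaries exactly. It also avoids steps the paper leaves implicit:
\begin{itemize}
\item shrinking each neighbor to a translate of $S$ that still meets the center and stays disjoint from the others;
\item referring to a ``largest-volume'' box, which does not exist in a class whose side lengths range over half-open intervals.
\end{itemize}
The paper's volume argument, in turn, is the one that extends to non-box shapes, such as the fat objects of Section~5, where no coordinatewise separation criterion is available.

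Two small cleanups. First, your ``main obstacle'' does not arise. The interval $[l(x_j)-2L,\,l(x_j)+2L)$ is exactly the disjoint union of the four cells $[l(x_j)+tL,\,l(x_j)+(t+1)L)$ for $t\in\{-2,-1,0,1\}$, so no fifth cell can occur. Your proposed fallback alignment is also the same grid. Second, the sentences in Step~1 about picking a point of $I_k^{(j)}\cap[l(x_j),u(x_j)]$ and about ``a full cell-boundary point'' are never used and can be deleted.
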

\begin{proof}
    To establish an upper bound on the value of $\zeta^{(i_1,\ldots,i_d)}$, we consider the problem of
    placing a maximum number of pairwise non-intersecting smallest-volume hyper-rectangles from the class $S^{(i_1,\ldots,i_d)}$ such that each of them
    intersects a largest-volume hyper-rectangle $L$ from the same class.
  Let $L=\prod_{j=1}^d[l(x_j),u(x_j)]$ 
     and $S=\prod_{j=1}^d[l'(x_j),u'(x_j)]$ such that for each $j\in[d]$, we have $|u(x_j)-l(x_j)|<2^{i_j+1}$ and $|u'(x_j)-l'(x_j)|=2^{i_j}$. Thus, $L$ and $S$ are, respectively, the largest- and smallest-volume hyper-rectangles in the class $S^{(i_1,\ldots,i_d)}$.
     Any translated copy of $S$ that intersects $L$ must be contained within a hyper-rectangle $L'=\prod_{j=1}^d[a(x_j),b(x_j)]$ such that for each $j\in[d]$, we have $|b(x_j)-a(x_j)|<2^{i_j+2}$.
Therefore, the problem reduces to packing non-overlapping copies of $S$ inside
    $L'$. Since the ratio between the volume of $L'$ and the volume of $S$ is $4^d$, at most $4^d$ pairwise non-intersecting copies of $S$ can be placed inside $L'$. Hence, $\zeta^{(i_1,\ldots,i_d)}$ is at most~$4^d$. \myqed
\end{proof}

Now, using~\Cref{IKN_rectangle},  the analysis of $\RCY$ is analogous to that of $\ACY$ given in~\Cref{thm:ACY}. The main  difference is that algorithm $\RCY$ selects a class $S^{(i_1,\ldots,i_d)}$  based on independently chosen indices $i_1,i_2,\ldots,i_d$ chosen uniformly at random from $\{0,1,\ldots,\lfloor \log M\rfloor\}$. The probability of selecting a class $S^{(i_1,\ldots,i_d)}$ is $\left(\frac{1}{\lfloor \log M\rfloor+1}\right)^d$ instead of $\frac{1}{\lfloor \log M\rfloor+1}$.  Combining this selection probability with the deterministic guarantee of $\AGDS$ and the bound from~\Cref{IKN_rectangle}, we complete the proof of~\Cref{thm:RCY}.

\section{Conclusion}\label{sect_conclusion}

In this paper, we first establish optimal deterministic bounds for the online MIS problem in general graphs, parameterized by the independent kissing number. Building on this foundation, we showed that randomization, together with access to the geometric representation, can be used to surpass deterministic lower bounds beyond disk graphs with bounded radii. This demonstrates that geometric structure and randomization can be leveraged meaningfully in the online setting, even beyond highly constrained geometric classes. 

Our results give rise to several natural open problems:
\begin{itemize}
    \item Can randomization with access to geometric representation surpass deterministic lower bounds for unit balls in $\IR^d$ when $d\ge 4$?
    \item Can similar techniques be used to beat deterministic lower bounds for other geometric objects, such as unit hypercubes in $\IR^d$?
    \item For similarly sized $\alpha$-fat objects, is it possible to close the gap between the known randomized lower bound of $\Omega(\log M)$ and the current upper bound of $O(\zeta'\log M)$?
\end{itemize}

\bibliography{references}

\end{document}